\newtheorem{theorem}{Theorem}
\newtheorem*{proof}{Proof}
\newtheorem{lemma}[theorem]{Lemma}
\newtheorem{definition}[theorem]{Definition}
\newtheorem{proposition}[theorem]{Proposition}
\newtheorem{example}[theorem]{Example}
\newtheorem{remark}[theorem]{Remark}
\newcommand{\eps}{\varepsilon}
\newcommand{\supro}{\mbox{$\sup {\rm RO}$}}
\newcommand{\lfteqn}{\vspace{-0.08in} \begin{eqnarray} \begin{array}{lllllll}}
\newcommand{\ndeqn}{\vspace{-0.08in} \end{array} \nonumber \end{eqnarray}}
\newcommand{\Lfteqn}{\vspace{-0.08in} \begin{eqnarray} \begin{array}{lllllll}}
\newcommand{\Ndeqn}{\vspace{-0.08in} \end{array}  \end{eqnarray}}
\title{\bf Supervisory Control of Multi-Agent Discrete-Event Systems\\
 with Partial Observation}
\author{Yingying Liu, Jan Komenda, and Zhiwu Li 
\thanks{
The work of Jan Komenda is supported by RVO 67985840 and GA{\v C}R
grant 19-06175J.}
\thanks{Yingying Liu is with Shanghai Jiao Tong University, Shanghai, 200240, China, Zhiwu Li is with School of Electro-Mechanical Engineering, Xidian University, Xi'an, 710071, China,
and Jan Komenda is with Institute of Mathematics, Academy of Sciences of the Czech
Republic, Prague, Czech Republic.}}
\begin{document}
\maketitle \thispagestyle{empty} \pagestyle{empty}

\begin{abstract}

In this paper we investigate multi-agent discrete-event systems with partial observation. The agents can be divided into several groups  in each of which the agents have identical (isomorphic) state transition structures, and thus can be {\it relabeled} into the same template. Based on the template a {\it scalable supervisor} whose state size and computational cost are independent of the number of agents is designed for the case of partial observation. The scalable supervisor under partial observation does not need  to be recomputed regardless of how many agents are added to or removed from the system. 
We generalize our earlier results to partial observation  by proposing sufficient conditions for safety and maximal permissiveness of  the scalable least restrictive supervisor on the template level.    An example is provided to illustrate the proposed scalable supervisory synthesis.
\end{abstract}


\section{Introduction}\label{intro}
In manufacturing, logistical or similar technological systems one often encounters the need 
of not specifying the number of subsystems a priori, since their number can vary  even without an upper bound. Very often, such components are just instantiations of a finite number of template subsystems, where each group of components is isomorphic to a template. 
Such systems, often called multi-agent discrete-event systems (DES), are used in situations where several entities (e.g. robots, machines) perform the same type of jobs, while their number can vary in time.
In multi-agent DES  the agents (modeled as subsystems) can be divided into several groups, and within each group the agents have similar or identical state transition structures.

The first work about multi-agent DES is found in \cite{RohloffLafortune2006}, where multi-agent DES with a single group of isomorphic agents  (i.e. single template) are considered.
A more general multi-agent DES framework has been presented in \cite{Su13}, where a broadcasting-based parallel composition rule is used to describe  both cooperative and competitive interactions of agents.  In \cite{Jiao17} the authors have studied modular discrete-event systems that are formed as synchronous products of components that allow several isomorphic agents. The authors exploit the symmetry using state tree structures for achieving a larger computational benefit.
In \cite{SuLennartsson17} all local requirements are also instantiated from a given requirement template. A control protocol synthesis is investigated, which assumes that each private alphabet is only observable to the corresponding agent, but the global alphabet is accessible by all agents.

In \cite{Automatica} the case of several templates is studied under the assumption that there are no shared events inside the groups (among isomorphic agents), but also no shared events among different groups (represented by templates). Moreover, the method in \cite{Automatica} only deals with the complete observations case.
Our goal is to synthesize supervisors for templates that guarantee a specification regardless of the  number of  agents corresponding to templates (i.e. number of agents in different groups). 

This work extends the results of \cite{Automatica} in several direction. Firstly, we generalize the computation of scalable supervisors to the case of partial observation, where observability of the specification language is needed to synthesize the supervisors. Secondly, we compare   permissiveness of a monolithic supervisor with a scalable, template based, supervisor under partial observation.  We investigate under which conditions a scalable least restrictive supervisor  based on supremal relatively observable sublanguage on the template level is not more restrictive than the monolithic supervisor. Finally, we relax the assumption about absence of shared events and allow shared events among different groups, i.e. the templates are allowed to have shared events.

The paper is organized as follows. The next
section  contains  preliminaries about supervisory control and recalls basic notions and results used throughout this paper.  Section 3  is devoted to investigation of sufficient conditions for existence of scalable safety supervisor based on the supervisor for the relabeled system. In Section 4 we 
propose sufficient conditions for the scalable least restrictive supervisor to be as permissive as the monolithic supervisor.
 Section 5 reaches  conclusions.

\section{Preliminaries}\label{pre}

Let the DES plant to be controlled be modeled by a {\it generator}
${\bf G}=(Z,\Sigma,\delta,z_0)$,  
where $\Sigma$ is a finite event set, $Z$ is the finite state set, $z_0\in Z$ the initial state,  and $\delta:Z\times \Sigma\rightarrow Z$ the (partial) transition function. Extend $\delta$ in the usual way such that $\delta:Z\times \Sigma^{*}\rightarrow Z$. The {\it closed behavior} of ${\bf G}$ is the language
$L({\bf G}):=\{s\in \Sigma^{*}\mid \delta(z_0,s)!\}$, where the notation $\delta(z_0,s)!$ means that $\delta(z_0,s)$ is defined. 
 A string $s_1\in \Sigma^{*}$ is a {\it prefix} of another string $\textit{s}\in \Sigma^{*}$, written $s_1\leq s$, if there exists $s_2\in \Sigma^{*}$ such that $s_1s_2$ = $s$. The length of string $s$ is denoted by $|s|$.  
 We say that $L({\bf G})$ is {\em prefix-closed\/} if every for string $s\in L({\bf G})$ every prefix $s_1\leq s$ is also in
 $L({\bf G})$.

For partial observation, let the event set $\Sigma$ be partitioned into $\Sigma_o$, the observable event subset, and $\Sigma_{uo}$, the unobservable subset (i.e. $\Sigma=\Sigma_o \dot{\cup} \Sigma_{uo}$). A {\em (natural) projection} $P: \Sigma^* \to \Sigma_o^*$ is defined according to
\begin{flushleft}
\centering{\ \ \ \ \ \ \ \ \ \ \ \ $P(\eps)=\eps,\ \eps \mbox{ is the empty string;} $}
$$ P(\sigma)=\left\{
\begin{aligned}
\eps, \mbox{ if}\ \sigma\notin \Sigma_o, \\
\sigma, \mbox{ if}\ \sigma\in \Sigma_o;
\end{aligned}
\right.
$$ 
\centering{\ \ \ \ \ \ \ \ \ \ \ \ \ \ \ $P(s\sigma)=P(s)P(\sigma), s\in \Sigma^*,\sigma\in \Sigma.$}
\end{flushleft}
In the usual way, $P$ is extended to $P :  Pwr(\Sigma^*) \to Pwr(\Sigma_o^*)$, where $Pwr(\cdot)$ denotes powerset. The {\em inverse image} of $P$, denoted by $P^{-1} : 
Pwr(\Sigma_o^*) \to Pwr(\Sigma^*)$, is defined as $P^{-1}(s)=\{w\in \Sigma^* 
\mid P(w) = s\}$. The definitions can naturally be extended to 
languages. The projection of a generator ${\bf G}$ is a generator $P({\bf G})$ whose 
behavior satisfies $L(P({\bf G}))=P(L({\bf G}))$ and $L(P({\bf G}))=P(L({\bf G}))$.
More details about partially observed DES can be found in \cite{CL08}. 

Fixing a reference sublanguage $C  \subseteq L({\bf G})$, we introduce relative observability of language $K$.
Let $K \subseteq C \subseteq L$. $K$ is {\em $C$-observable\/} with respect to  $L$ and $\Sigma_o$ if 
\begin{align*} 
&(\forall s,s'\in \Sigma^*) (\forall \sigma \in \Sigma)  (s\sigma \in K\ \&\ s' \in C\ \&\ s'\sigma \in L({\bf G}) \\
& \&\ P(s)=P(s'))\Rightarrow  s'\sigma \in K.
\end{align*}
  The following Lemma, transitivity of relative observability, is needed in the proof of Theorem~\ref{Theorem:main1}).
\begin{lemma} \label{transitivityRO}
Suppose $K\subseteq N\subseteq L=L(G)\subseteq \Sigma^*$ and reference languages $C\subseteq C'$. If $K$ is relatively observable with respect to $C$ and $N$ and $N$ is relatively observable with respect to  $C'$ and $L$, then $K$ is relatively observable with respect to $C$ and $L$.
 \end{lemma}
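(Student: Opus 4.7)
The plan is to unfold the definition of relative observability directly and apply the two hypotheses in sequence. So fix arbitrary $s, s' \in \Sigma^*$ and $\sigma \in \Sigma$ satisfying the four premises in the definition of $C$-observability of $K$ with respect to $L$: namely $s\sigma \in K$, $s' \in C$, $s'\sigma \in L$, and $P(s)=P(s')$. The goal is to deduce $s'\sigma \in K$.

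First I would use the outer hypothesis (relative observability of $N$ with respect to $C'$ and $L$) to bring $s'\sigma$ into the intermediate language $N$. The premises required for this are already in hand: $s\sigma \in K \subseteq N$, $s' \in C \subseteq C'$, $s'\sigma \in L$, and $P(s)=P(s')$. Hence $s'\sigma \in N$. Then I would apply the inner hypothesis (relative observability of $K$ with respect to $C$ and $N$), whose premises are now all established: $s\sigma \in K$, $s' \in C$, $s'\sigma \in N$, and $P(s)=P(s')$. This yields $s'\sigma \in K$, which closes the argument.

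Since every step is a direct invocation of a hypothesis after a trivial inclusion ($K \subseteq N$, $C \subseteq C'$), there is no real obstacle; the only thing to be careful about is to feed the correct reference language ($C'$ or $C$) and the correct ambient language ($L$ or $N$) into each application, and to note that the inclusions $C \subseteq C'$ and $K \subseteq N$ are used precisely to translate the premise $s\sigma \in K$, $s' \in C$ of the conclusion into the premises needed for the outer step.
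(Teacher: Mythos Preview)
Your proposal is correct and follows essentially the same approach as the paper: first use $K\subseteq N$ and $C\subseteq C'$ together with relative observability of $N$ with respect to $C'$ and $L$ to obtain $s'\sigma\in N$, then invoke relative observability of $K$ with respect to $C$ and $N$ to conclude $s'\sigma\in K$. The paper's proof is identical up to notation (it writes $b$ for your $\sigma$).
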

\begin{proof} Let $sb\in K$, $s'\in C$, $P(s)=P(s')$, $b\in \Sigma$, and $s'b\in L$. We need to show that $s'b\in K$. Since $K\subseteq N$, i.e. $K\subseteq N$ as well, we have $sb\in N$.  Similarly, $C\subseteq C'$ implies $C\subseteq C'$. Thus, we have $s'\in C'$. Since $P(s)=P(s')$ and $s'b\in L$ it follows from $N$ being relatively observable with respect to  $C'$ and $L$ that $s'b\in N$. Finally, we obtain from relative observability of $K$ with respect to $C$ and $N$ that $s'b\in K$, which shows that  is relatively observable with respect to $C$ and $L$. 
\end{proof}

It is proved in \cite{Cai13} that relative observability is closed under arbitrary set unions. Given a generator ${\bf G}$ with $\Sigma=\Sigma_o\dot{\cup}\Sigma_{uo}$. Let $E\subseteq \Sigma^*$ be a specification language for ${\bf G}$. The family of all sublanguages of $E$ that are $E$-observable  with respect to $L$ and $\Sigma_o$ is
$\mathcal{O}(E,L) := \{K\subseteq E | K  \mbox{ is} \ E\mbox{-observable wrt }L \mbox{and}\ \Sigma_o\}.$
Then $\mathcal{O}(E,L)$ has a unique supremal element \cite{Cai13}, i.e.
\begin{align*}
\sup\mathcal{O}(E,L)=\cup \{K|K\in \mathcal{O}(E,L)\}.
\end{align*}

To describe the structure of a multi-agent plant ${\bf G}$, we briefly review a concept, called {\it relabeling map}. We refer the reader to \cite{Automatica} and reference therein for a more complete treatment for the relabeling map.
Let $T$ be a set of new events, i.e. $\Sigma \cap T=\emptyset$, and $R: \Sigma\rightarrow T$. Define a \textit{relabeling} map $R: \Sigma\rightarrow T$ such that is surjective  but need not be injective.
We require that events $\sigma_1, \sigma_2\in \Sigma$ with the same $R$-image, i.e. $R(\sigma_1)= R(\sigma_2)=\tau\in T$, have similar physical meaning and are just two instantiations of $\tau$ in different but isomorphic subsystems.
We extend $R$ by morphism to $R: \Sigma^{*}\rightarrow T^{*}$ according to

(i) $R(\varepsilon) = \varepsilon$, $\varepsilon$ is the empty string;

(ii) $R(s\sigma) = R(s)R(\sigma)$, $\sigma\in\Sigma$, $s\in \Sigma^*$.

\noindent Note that $R(s) \neq \varepsilon$ for all $s \in \Sigma^* \setminus \{\varepsilon\}$.
Further extend $R$ for languages, i.e. $R:Pwr(\Sigma^{*})\rightarrow Pwr(T^{*})$, and define
$R(L)=\{R(s) \in T^* | s\in L\},\ \ L\subseteq \Sigma^{*}.$ 
The \textit{inverse-image function} $R^{-1}$ of $R$ is given by $R^{-1}:$ $Pwr(T^{*})\rightarrow Pwr(\Sigma^{*})$:
 $R^{-1}(H)=\{s\in \Sigma^{*}| R(s)\in H\}$, \ $H\subseteq T^{*}$.
Note that $RR^{-1}(H)=H$, $H\subseteq T^{*}$;
while $R^{-1}R(L)\supseteq L$, $L\subseteq \Sigma^{*}$.
We say that $L \subseteq \Sigma^*$ is  (${\bf G}, R$)-{\it normal} if $R^{-1}R(L) \cap L({\bf G}) \subseteq L$. 

We now discuss computation of $R$ and $R^{-1}$ by generators.
Let \\
${\bf G}=(Q,\Sigma,\delta,q_0)$ be a generator.  First, relabel each transition of ${\bf G}$ to obtain ${\bf G}_T = (Q,T,\delta_T,q_0)$, where $\delta_T : Q \times T \rightarrow Q$ is defined by
$\delta_T(q_1, \tau) = q_2 \mbox{ iff } (\exists \sigma \in \Sigma) R(\sigma)=\tau \ \&\ \delta(q_1,\sigma)=q_2.$
Hence $L({\bf G}_T)=R(L({\bf G}))$. However, ${\bf G}_T$ as given above may be {\it nondeterministic} \cite{Wonham16}. Thus apply {\it subset construction} \cite{Wonham16} to convert ${\bf G}_T$ into a deterministic generator ${\bf H}=(Z,T,\zeta,z_0)$, with $L({\bf H})=L({\bf G}_T)$. 

\begin{figure}
  \centering
  \includegraphics[width=0.18\textwidth]{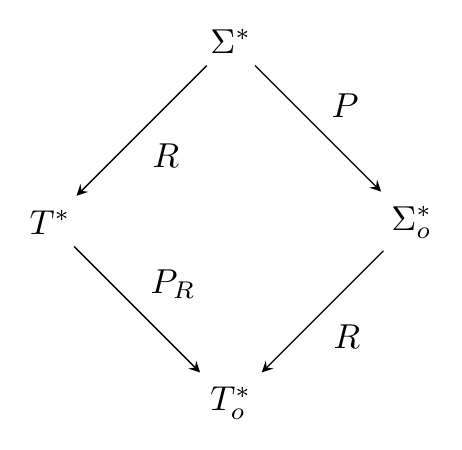}
  \caption{Schematic of natural projection $P$ and relabeling map $R$}
\label{fig:nr}
\end{figure}
We require that $R$ preserves observability status of events in $\Sigma$. Thus,  
 $T_o:=\{R(\sigma)|\sigma \in \Sigma_o\}$, $T_{uo}:=\{R(\sigma)|\sigma \in \Sigma_{uo}\}$, and $T=T_o\dot{\cup}T_{uo}$. 
Let $P: \Sigma^*\rightarrow\Sigma_o^*$ and $P_R: T^*\rightarrow T_o^*$ be natural projections, their relationships with $R$ are shown in~Fig.~\ref{fig:nr}.

Large complex DES are built out of the small ones using concurrent composition known as synchronous product.
Given languages $L_i\subseteq \Sigma_i^*, \; i=1,\dots,n$,
their synchronous product (parallel composition) is defined by 
$L_1\| \dots \|L_n=\bigcap_{i=1}^n P_i^{-1}(L_i)  \subseteq \Sigma^*$, where 
$P_i: \Sigma^*\to \Sigma_i^*$ are projections to local alphabets. There are corresponding definitions in terms of generators, i.e. $L({\bf G}_1 \| {\bf G}_2) = L({\bf G}_1) \| L({\bf G}_2)$\cite{CL08}.

\section{Scalable Supervisor under Partial Observation}\label{pro}

Let $R: \Sigma^{*}\rightarrow T^{*}$ be a relabeling map, and  the plant {\bf G} can be divided into $l\ (\geqslant1)$ groups of component agents. In each group $\mathcal{G}_i \,(i \in \{1,\ldots,l\})$ being a similar set of generators under the given relabeling map $R$, i.e. $\mathcal{G}_i = \{ {\bf G}_{i_1},\ldots,{\bf G}_{i \, n_i} \}$ ($n_i \geq 1$) and there is a template generator ${\bf H}_i$ such that
\begin{align} \label{eq:Hi}
(\forall j \in\{1,\dots,n_i\}) R({\bf G}_{i_j}) = {\bf H}_i.
\end{align}
Let ${\bf G}_{i_j}$ be defined on $\Sigma_{ij}=\Sigma_{ij,o}\dot{\cup}\Sigma_{ij,uo}$,  templates ${\bf H}_i$ on $T_i=T_{i,o}\dot{\cup} T_{i,uo}$, and group languages $\mathcal{G}_i$ on $\Sigma_{i}=\bigcup_{j=1,...,n_i }\Sigma_{ij}$. Then  $R(\Sigma_{ij})=T_i$, $R(\Sigma_{ij,o})=R(\Sigma_{i,o})=T_{i,o}$, and $R(\Sigma_{ij,uo})=R(\Sigma_{i,uo})=T_{i,uo}$ for all $j \in \{1,...,n_i\}$. We emphasize that the number $n_i$ of generators (agents) in group $i$ is not fixed but may vary in time. We denote the alphabet of $\mathcal{G}_i$ by $\Sigma_i$, i.e.
$\Sigma_i=\cup_{j =1}^{n_i}\Sigma_{ij}$.
We require that $R$ preserves local status of
events in $\Sigma$; namely $R(\sigma)$ is an event in $T_i$ if and only if $\sigma\in \Sigma_i$. Thus $T_i:=\{R(\sigma)|\sigma \in \Sigma_i\}$ and $T=\bigcup_{i=1}^{i=l} T_i$. Similarly, $P_{i|R}: T^*\to T_i^*$, for $i=1,2$, are projections to local relabeled event sets.
The relationships of $R$, projections $P_i$, and $P_{i|R}$ are shown in Fig.~\ref{fig:nr2}.


Now we make the following assumptions.

\noindent (A1) The specification language $E$ is prefix closed and only generated languages are considered. 


\noindent (A2) The specification language $E$ is $({\bf G}, R)$-normal, i.e. $R^{-1}R(E)\cap L({\bf G}) \subseteq E$.

Due to (A1) all component agents are then automatically nonblocking.
Note that blocking issue can be handled in a similar way as in hierarchical control and has been solved for relabeling in \cite{IJC} by adapting an observer property (OP) in so called relabeled OP (ROP), but in this paper we focus on another extension of the framework in \cite{Automatica}, namely to partial observations
and maximal permissiveness.

\begin{figure}
  \centering
  \includegraphics[width=0.18\textwidth]{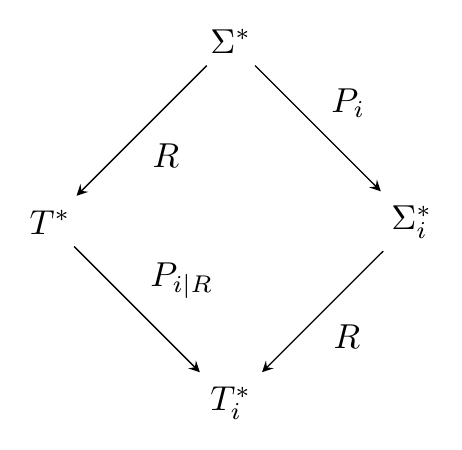}
  \caption{Schematic of  local projection  $P_i\; (i=1,\dots, l)$, $P_{i|R}$, and relabeling map $R$}
\label{fig:nr2}
\end{figure}

Given plant ${\bf G}$ and specification $E$, the monolithic supervisor under partial observation is computed based on supremal relatively observable sublanguages. The whole plant ${\bf G}$ is computed as synchronous product of all component agents:
\begin{align}\label{gi}
{\bf G} = ||_{i=1,\ldots,l} {\bf G}_i,\ \mbox{ where } {\bf G}_{i} = ||_{j=1,\ldots,n_i} {\bf G}_{i_j}.
\end{align}
Recall that ${\bf G} = ||_{i=1,\ldots,l} {\bf G}_i=\bigcap_{i=1,\ldots,l}P_i^{-1}( {\bf G}_i)$ for $P_i: \Sigma^*\rightarrow \Sigma_i^*$. Then supremal relatively observable sublanguage is computed:
\begin{align}
\label{SUPRO}
L({\bf SUP}_{o}) = \sup\mathcal{O}(E \cap L({\bf G}),L({\bf G})).
\end{align}

The supervisor ${\bf SUP}_{o}$  must be recomputed or reconfigured in order to adapt to the change of the number of agents (increases when more agents are added into the system to enhance productivity or decreases when some agents malfunction and are removed from the system). Therefore, in this paper we aim to synthesize scalable supervisors under partial observation whose state size is independent of the number of agents. 






\subsection{Scalable supervisory control with partial observation}\label{mainresult}
In this subsection we design a scalable supervisor ${\bf SSUP}_o$ that is independent of the number $n_i$ of agents for all $i \in \{1,\ldots,l\}$ and satisfies 
$\{\epsilon \} \subset$ $L({\bf SSUP}_o) \cap L({\bf G}) \subseteq L({\bf SUP}_o)$,
while the opposite inclusion (maximal permissiveness) is studied next.


Consider the plant {\bf G} as described in (\ref{gi}). Let $\Sigma =\Sigma_o \dot\cup \Sigma_{uo})$ be the event set of {\bf G}, and $R : \Sigma \rightarrow T$ a relabeling map. The procedure of designing a scalable supervisor under  partial observation is in steps (P1)-(P4), which  first synthesizes a supervisor for `relabeled system' under $R$ and then inverse-relabel the supervisor.
\smallskip

\noindent (P1) First compute the {\it relabeled plant}. Let $k_i\in \{1,2, \dots, n_i\}$ be the number of agents in group $i$ allowed to work in parallel, and compute the template ${\bf M}_i := R(||_{j=1,\dots,k_i} {\bf G}_{i_j})$. Then compute  the relabeled plant ${\bf M}$ as the synchronous product of the template generators ${\bf M}_i$, i.e.
\vspace{-3.0mm}
\begin{align} \label{eq:M}
{\bf M} := ||_{i=1,...,l} {\bf M}_i.
\end{align}

The event set of ${\bf M}$ is $T =T_o \dot\cup T_{uo}$, where $T_o = R(\Sigma_o)$, and $T_{uo}=R(\Sigma_{uo})$. 
Note that $k_i$ should be much smaller than $n_i$ (the number of agents in group $i$) for computational efficiency, it is trade-off between expressiveness and complexity . When $k_i=1$, we have ${\bf M}_i={\bf H}_i$ given by (\ref{eq:Hi}). Note that once $k_i$ are fixed, the
state sizes of ${\bf M}_i$ and ${\bf M}$ are fixed as well, and thus independent of the number $n_i$ of agents in group $i$. 
Recall that $k_i>1$ is desirable for allowing agents to work in parallel as 
$R({\bf G}_{i_1}\| {\bf G}_{i_2})$ can be strictly larger than 
$R({\bf G}_{i_1}) \|R({\bf G}_{i_2})=H_i \| H_i=H_i$,
which distinguishes relabeling (surjective mask) from natural projection.

\noindent (P2) Compute {\it relabeled specification} $F := R(E)$, where $E \subseteq \Sigma^*$ is the specification imposed on {\bf G}. 

\noindent (P3) Synthesize a {\it relabeled supervisor} under partial observation ${\bf RSUP}_{o}$ (a nonblocking generator) such that
\vspace{-3.0mm}
\begin{align*}
L({\bf RSUP}_{o}) = \sup\mathcal{O}(F,L({\bf M})) \subseteq T^*.
\end{align*}

\noindent (P4) Inverse-relabel ${\bf RSUP}_{o}$ to derive {\it scalable supervisor} ${\bf SSUP}_{o}$, i.e.
\vspace{-3.0mm}
\begin{align}\label{eq:SSUP}
{\bf SSUP}_{o} := R^{-1} ({\bf RSUP}_{o})
\end{align}

Notice that the computations involved in the above procedure are independent of the number $n_i$ ($i\in \{1,...,l\}$) of agents.
In (P1), once $k_i$ are fixed, the state sizes of ${\bf M}_i$ and ${\bf M}$ are fixed and independent of the number $n_i$ of agents in group $i$ (although dependent on $k_i$).
In (P3), the number of states of ${\bf RSUP}_{o}$ is independent of the number of agents as the state size of ${\bf M}$ is so.
Finally in (P4), inverse-relabeling does not change the number of states. Therefore ${\bf SSUP}_{o}$ has the same number of states as ${\bf RSUP}_{o}$. It then follows that  the state size of ${\bf SSUP}_{o}$
is independent of the number of agents in plant {\bf G}. 

Arguably, with large numbers of agents $k_i$ in the templates ${\bf M}_i$ and for a large number $l$ of modules (groups) it may be computationally challenging to compute  the global template ${\bf M}$. In such a situation we propose to combine the results of this paper
with coordination control approach, see e.g. \cite{KMvS15}, which consists in conditionally decomposing the template specification $F$, construct the corresponding coordinator and compute local supervisors for local templates ${\bf M}_i$ combined with the coordinator.
More formally, instead of computing $\sup\mathcal{CO}(F,L({\bf M})) $
we can first find a coordinator alphabet $\Sigma_k$ containing at least shared events
$\Sigma_s=\cup_{i,i'=1,\dots,l, \; i\not=i'} \Sigma_i\cap \Sigma_{i'}$ such that $F=\|_{i=1,\dots,l}P_{i+k}(F)$ is conditionally decomposable with respect to ``augmented'' local alphabets $\Sigma_i \cup \Sigma_k$, where $P_{i+k}:\Sigma^* \to (\Sigma_i \cup \Sigma_k)^*$ are the corresponding natural projection. Then the underlying coordinator is ${\bf M}_k=\|_{i=1,\dots,l} P_k({\bf M}_i)$, where
$P_k: \Sigma^* \to \Sigma_k^*$ is natural projection. Instead of computing supervisor for the whole template, i.e. $\sup\mathcal{CO}(F,L({\bf M})) $, we can compute supervisors
$\sup\mathcal{O}(P_{i+k}(F) ,L({\bf M}_i \| {\bf M}_k)) $. In general we only have $\|_{i=1,\dots,l}  \sup\mathcal{CO}(P_{i+k}(F) ,L({\bf M}_i \| {\bf M}_k)) \subseteq \sup\mathcal{CO}(F,L({\bf M}))$, but under some conditions (e.g. mutual observability between coordinated templates or conditions used in hierarchical control with partial observation) the equality holds.

 Now we compare the designed scalable supervisor for multi-agent systems with  the monolithic one. 
We emphasize that unlike modular or hierarchical control, where modular or
abstracted safety supervisor is always included in the monolithic one,
the situation here  is more complicated, because scalable supervisor is computed with respect to the template and not with respect to the relabeling of the plant $R(L({\bf G}))$, which would correspond to hierarchical control with relabeling being an abstraction. This is because the
relabeling map does not distribute with the synchronous product. In multi-agent systems we naturally use templates for control synthesis. Note that the inclusion 
\begin{equation*}
L({\bf M})=\|_{i=1}^l L({\bf M}_i)=\|_{i=1}^l R(||_{j=1,\dots,k_i} {\bf G}_{i_j}) \subseteq R(L({\bf G}))
\end{equation*}
 holds under the assumption that there are no shared events inside the group, and allow to share events among templates. The above inclusion
 is typically strict as the relabeling behaves differently than natural projection with respect to the
 synchronous product.
Hence, due to anti-monotonicity of supremal control operators in the plant argument, the scalable supervisor computed with respect to ${\bf M}$ as a plant can be larger than the supervisor (computed using $R({\bf G})$. Therefore the inclusion studied in the result below that establishes safety is also non trivial. 
 
\begin{theorem}\label{Theorem:main1} 
\label{supRO}
Suppose that (A1) and (A2) hold. If $ L({\bf M})$ is relatively observable with respect to $R(E \cap L({\bf G}))$ and $R( L({\bf G}))$, then $ L({\bf SSUP}_{o})\cap L({\bf G})\subseteq L({\bf SUP}_{o})$.
  \end{theorem}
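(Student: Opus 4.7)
The plan is to show that $K := L({\bf SSUP}_o)\cap L({\bf G})$ itself belongs to the family $\mathcal{O}(E\cap L({\bf G}),\,L({\bf G}))$; since $L({\bf SUP}_o)$ is, by~(\ref{SUPRO}), the supremum of that family, the desired inclusion $K\subseteq L({\bf SUP}_o)$ then follows immediately. Thus the task splits into two sub-claims: (i) $K\subseteq E\cap L({\bf G})$; and (ii) $K$ is $(E\cap L({\bf G}))$-observable with respect to $L({\bf G})$ and $\Sigma_o$.

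Sub-claim (i) is essentially a direct consequence of $({\bf G},R)$-normality. Because $L({\bf RSUP}_o)\subseteq F=R(E)$ by the construction in~(P3), inverse-relabeling gives $L({\bf SSUP}_o)=R^{-1}(L({\bf RSUP}_o))\subseteq R^{-1}R(E)$; intersecting with $L({\bf G})$ and invoking (A2) then yields $K\subseteq R^{-1}R(E)\cap L({\bf G})\subseteq E$, whence $K\subseteq E\cap L({\bf G})$ as required.

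The substance of the argument lies in sub-claim (ii). Fix arbitrary $s\sigma\in K$ together with $s'\in E\cap L({\bf G})$ satisfying $P(s)=P(s')$ and $s'\sigma\in L({\bf G})$; since $s'\sigma\in L({\bf G})$ is already in hand, it suffices to deduce $R(s'\sigma)\in L({\bf RSUP}_o)$. My plan is a two-stage chain in the relabeled alphabet $T$, tied together by the identity $P_R(R(s))=P_R(R(s'))$, which follows from $P(s)=P(s')$ via the commutativity $P_R\circ R=R\circ P$ depicted in Fig.~\ref{fig:nr} and guaranteed by the standing assumption that $R$ preserves the observable/unobservable status of events. In stage one I apply the theorem's hypothesis, namely that $L({\bf M})$ is relatively observable with respect to $R(E\cap L({\bf G}))$ and $R(L({\bf G}))$: the inputs are $R(s\sigma)\in L({\bf M})$ (since $R(s\sigma)\in L({\bf RSUP}_o)\subseteq L({\bf M})$), $R(s')\in R(E\cap L({\bf G}))$ and $R(s'\sigma)\in R(L({\bf G}))$, and the output is $R(s'\sigma)\in L({\bf M})$. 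In stage two I apply relative observability of $L({\bf RSUP}_o)$ with respect to $F$ and $L({\bf M})$, which holds by its defining supremum in~(P3), to $R(s\sigma)\in L({\bf RSUP}_o)$, $R(s')\in F$ (since $R(E\cap L({\bf G}))\subseteq R(E)=F$) and $R(s'\sigma)\in L({\bf M})$ just obtained, concluding $R(s'\sigma)\in L({\bf RSUP}_o)$ as needed.

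The main obstacle I foresee is that Lemma~\ref{transitivityRO} cannot be invoked as a black box here: the reference-language inclusion $C\subseteq C'$ it demands runs the wrong way in our setting, since we have $R(E\cap L({\bf G}))\subseteq R(E)=F$ rather than the opposite. I therefore plan to carry out the two relative-observability applications by unwinding the definition directly, which amounts to essentially the same calculation performed in the proof of that lemma but arranged in the order forced by the two reference languages appearing here.
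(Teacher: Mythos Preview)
Your proposal is correct and matches the paper's strategy: establish (i) the inclusion into $E\cap L({\bf G})$ via $({\bf G},R)$-normality, then (ii) relative observability of $K$ by passing to the relabeled alphabet and chaining two relative-observability facts. The only difference is packaging: the paper first observes that relative observability of $L({\bf RSUP}_o)$ with respect to the larger reference $R(E)$ and $L({\bf M})$ a fortiori yields relative observability with respect to the smaller reference $R(E\cap L({\bf G}))$, after which Lemma~\ref{transitivityRO} applies directly with $C=C'=R(E\cap L({\bf G}))$---so your worry that the hypothesis $C\subseteq C'$ runs the wrong way is misplaced, though your hands-on two-stage argument reaches the same conclusion by essentially re-deriving the lemma in place.
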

\begin{proof}
Since $L({\bf SUP}_{o})=\supro (E\cap L({\bf G}),L({\bf G}))$, it suffices to prove that (i) $ L({\bf SSUP}_{o})\cap L({\bf G})\subseteq E \cap L({\bf G})$ and (ii) $L({\bf SSUP}_{o})\cap L({\bf G})$ is relatively observable with respect to $E \cap L({\bf G})$ and $L({\bf G})$. For (i) we have \\
$ L({\bf SSUP}_{o})\cap L({\bf G})=R^{-1}(L({\bf RSUP}_o))\cap L({\bf G})$\\ (by (P4))$=R^{-1}\supro (R(E),L({\bf M}))\cap L({\bf G})$
$\subseteq R^{-1}R(E) \cap L({\bf G})=E \cap L({\bf G})$ by assumption (A2). 

For (ii), let $sb\in L({\bf SSUP}_{o})\cap L({\bf G})$, $s'\in E \cap L({\bf G})\subseteq E\cap L({\bf G})$,  and $s'b\in L({\bf G})$. We need to show that $s'b\in L({\bf SSUP}_{o})\cap L({\bf G})$. We have 
$sb \in  L({\bf SSUP}_{o})\cap L({\bf G})
= R^{-1}\supro (R(E),L({\bf M}))\cap L({\bf G})$
thus $R(sb)\in \supro (R(E),L({\bf M}))\cap L({\bf G})$. We also have $R(s')\in R(E)$, and $R(s'b)\in R(L({\bf G}))$. Since $R(E \cap L({\bf G}))\subseteq R(E)$ and $\supro (R(E),L({\bf M}))$ is relatively observable with respect to $R(E)$ and $R( L({\bf G}))$, we get that $\supro (R(E),L({\bf M}))$ is relatively observable with respect to $R(E \cap L({\bf G}))$ and $L({\bf M})$\cite{Cai13}. By Lemma~\ref{transitivityRO},  combining it with relative observability of $L({\bf M})$ with respect to  $R(E \cap L({\bf G}))$ and $R(L({\bf G}))$, it implies that $\supro (R(E),L({\bf M}))$ is relatively observable with respect to $R(E \cap L({\bf G}))$ and $R(L({\bf G}))$. We  have 
$R(s'b)\in \supro (R(E),L({\bf M}))$. Therefore, $s'b\in R^{-1}R(s'b)  \subseteq R^{-1}\supro (R(E),L({\bf M}))\cap L({\bf G})=L({\bf SSUP}_{o})\cap L({\bf G})=L({\bf SSUP}_{o})\cap L({\bf G})$, which proves (ii). 
\end{proof}

Note that in this paper we  only discuss the  observability problem, since the controllability of the scalable supervisor has been well studied in \cite{Yingying18}.
Theorem~\ref{Theorem:main1}  provides a sufficient condition under which the scalable supervisor is always included in the monolithic one. This condition is $ L({\bf M})$ is  relatively observable with respect to $R(E \cap L({\bf G}))$ and $R( L({\bf G}))$. As shown above, this condition is essential in proving  relative observability of $L({\bf SSUP}_{o})\cap L({\bf G})$ with respect to $E \cap L({\bf G})$ and $L({\bf G})$. However, the computation of {\bf G} is required in this condition, which needs to be avoided. We will make an additional assumption and give the following result.

(SEF) We assume that the event sets of systems inside each group are pairwise disjoint, i.e. for all $i \in \{1,\ldots,l\}$ and $j,j' \in \{1,\ldots,k_l\}$, 
$\Sigma_{ij}\cap \Sigma_{ij'}=\emptyset$. 

Note that unlike \cite{Automatica} we allow shared events between templates.
\begin{proposition} \label{prop:check}
Let (SEF) hold.
If for each group $i \in \{1,\ldots,l\}$ and ${\bf G}_{i_1},{\bf G}_{i_2}\in \mathcal{G}_i$, $L({\bf H}_{i})$ is  relatively observable with respect to $R(P_i(E)\cap L({\bf G}_{i_1}\|{\bf G}_{i_2}))$ and $R(L({\bf G}_{i_1}\|{\bf G}_{i_2}))$, then $L({\bf M})$ is  relatively observable with respect to $R(E\cap L({\bf G})$ and $R(L({\bf G}))$.
  \end{proposition}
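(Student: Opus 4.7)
The plan is to verify the definition of relative observability for $L({\bf M})$ directly. Fix $s, s' \in T^*$ and $\tau \in T$ with $s\tau \in L({\bf M})$, $s' \in R(E \cap L({\bf G}))$, $s'\tau \in R(L({\bf G}))$, and $P_R(s) = P_R(s')$; the goal is to conclude $s'\tau \in L({\bf M})$. Since $L({\bf M}) = \|_{i=1}^{l} L({\bf M}_i) = \bigcap_{i=1}^l P_{i|R}^{-1}(L({\bf M}_i))$, this reduces to verifying $P_{i|R}(s'\tau) \in L({\bf M}_i)$ for every group $i \in \{1,\dots,l\}$.

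For each fixed $i$, I would first push the global premises through $P_{i|R}$, using the commutation $R \circ P_i = P_{i|R} \circ R$ (which holds because $R$ preserves local event membership) together with $P_{R,i} \circ P_{i|R} = P_{i|R} \circ P_R$ (which holds because $R$ preserves observability status, as enforced right before Fig.~\ref{fig:nr2}). This should yield the group-level premises $P_{i|R}(s\tau) \in L({\bf M}_i)$, $P_{i|R}(s') \in R(P_i(E) \cap L({\bf G}_i))$, $P_{i|R}(s'\tau) \in R(L({\bf G}_i))$, and $P_{R,i}(P_{i|R}(s)) = P_{R,i}(P_{i|R}(s'))$, where by (SEF) $L({\bf G}_i) = \|_{j=1}^{n_i} L({\bf G}_{i_j})$ is a pure shuffle of isomorphic copies of the template.

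The heart of the argument is then to deduce $P_{i|R}(s'\tau) \in L({\bf M}_i) = R(\|_{j=1}^{k_i} L({\bf G}_{i_j}))$ from the pairwise hypothesis. I would argue that, under (SEF), the pairwise relative observability of $L({\bf H}_i)$ against two-agent references extends — by an induction on $n_i$ that localizes any $n_i$-agent observational confusion to an exchange between just two copies of the template — to relative observability of $L({\bf M}_i)$ with respect to the full-group references $R(P_i(E) \cap L({\bf G}_i))$ and $R(L({\bf G}_i))$; Lemma~\ref{transitivityRO} would be invoked at each inductive step to compose the $(j,j{+}1)$-pair RO with the already established $(1,\dots,j)$-shuffle RO. Applied to the projected premises, this lifted RO immediately yields $P_{i|R}(s'\tau) \in L({\bf M}_i)$ whenever $\tau \in T_i$; the degenerate case $\tau \notin T_i$ is handled by a short auxiliary induction on the prefixes of $s'$, invoking the just-established group-level RO at each observable step and bottoming out at the empty string via prefix closure (assumption (A1)).

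The main obstacle is precisely this lifting step: justifying rigorously that pairwise relative observability at the template level yields relative observability of $L({\bf M}_i)$ against the $n_i$-agent plant. The difficulty is exactly the non-distributivity of the relabeling map over synchronous product that is stressed just before Theorem~\ref{Theorem:main1}, which means a given word in $R(L({\bf G}_i))$ may admit many different agent assignments, and one must show that any two observationally-equivalent preimages can always be matched through a pairwise template behavior already covered by the hypothesis. Once this is in place, combining the group-wise conclusions across $i=1,\dots,l$ gives $P_{i|R}(s'\tau) \in L({\bf M}_i)$ for all $i$, hence $s'\tau \in L({\bf M})$, completing the verification.
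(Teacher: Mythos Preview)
Your proposal is correct and follows essentially the same route as the paper: both verify the definition of relative observability for $L({\bf M})$ directly, decompose via $L({\bf M})=\bigcap_i P_{i|R}^{-1}(L({\bf M}_i))$, push the premises to each group $i$ using the commutations $R\circ P_i=P_{i|R}\circ R$ and $P_R\circ P_{i|R}=P_{i|R}\circ P_R$, and then invoke a group-level relative observability of $L({\bf M}_i)$ with respect to $R(P_i(E)\cap L({\bf G}_i))$ and $R(L({\bf G}_i))$ obtained by an inductive lift of the pairwise hypothesis combined with Lemma~\ref{transitivityRO}. The paper packages this lift as a chain of auxiliary results (Lemma~\ref{Lemma:check0}, the ``go down'' property in Remark~\ref{remark}, Lemma~\ref{Lemma:check1}, and Proposition~\ref{prop0}); your separate treatment of the case $\tau\notin T_i$ via an induction on prefixes of $s'$ is a detail the paper's proof leaves implicit.
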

The proof of Proposition~\ref{prop:check} is referred to Section~5. Proposition~\ref{prop:check} indicates that relative observability of  $L({\bf M})$ with respect to $R(E\cap L({\bf G}))$ and $R(L({\bf G}))$ can be verified by checking  relative observability of $L({\bf H}_{i})$ for each group with respect to only two (arbitrarily chosen) component agents.
 Therefore, the computational effort of checking the sufficient condition in Theorem~\ref{Theorem:main1} is low.  We recall that algorithms for checking relative observability and computing supremal relatively observable sublanguages are proposed in  \cite{Cai2018} and \cite{ACB17}.   
An illustrative example is given below.
\begin{example}
Consider a small factory consisting of $n_1$ input machines \\
${\bf G}_{1_1}, \dots, {\bf G}_{1_{n_1}}$ and $n_2$ output machines ${\bf G}_{2_1},  \dots, {\bf G}_{2_{n_2}}$, linked by a buffer in the middle. The generators of the agents are shown in Fig.~\ref{fig:SmallFact22}. Based on their different roles, the machines are divided into 2 groups:
$\mathcal{G}_1=\{{\bf G}_{1_1},\dots, {\bf G}_{1_{n_1}}\}$ and $\mathcal{G}_2=\{{\bf G}_{2_1},\dots,{\bf G}_{2_{n_2}}\}$.
Let the relabeling map $R$ be given by
\begin{align*}
& R(1i0)=10,\ R(1i1)=11,\ R(1i2)=12,\ R(1i3)=13,\\
& R(2j0)=20,\ R(2j1)=21,\ R(2j2)=22,\ R(2j3)=23\ 
\end{align*}
with template generators ${\bf H}_1$ and ${\bf H}_2$ respectively, where \\
$\Sigma = \Sigma_o \dot\cup \Sigma_{uo} = \{1i0,1i1,1i3,2j0,2j1,2j3\} \dot\cup \{1i2,2j2 \}$, $i \in [1,n_1]$, and  $j\in [1,n_2]$. 

Since the event sets of agents in the same group are pairwise disjoint 
($\Sigma_1\cap \Sigma_2=\{1i0,1i1,1i2,1i3\}\cap \{2i0,2i1,2i2,2i3\}=\emptyset$), Assumptions (A1) and (SEF) hold for this example. The specification is to  avoid underflow and overflow of buffer with two capacities, which is enforced by a generator {\bf E} and shown in Fig.~\ref{fig:spec} (lower part). It can be verified that ${\bf E}=R^{-1}R({\bf E})$, we thus have Assumption  (A2) holds. Now we design templates for each group. Let $k_1=2, k_2=1$. The templates are $L({\bf M}_1) := L(R({\bf G}_{11}\|{\bf G}_{12}))$ (for the input group) and $L({\bf M}_2) := L(R({\bf G}_{2_1}))$ (for the output group).  Note that in this example we have
$L({\bf M_1})=R(L({\bf G}_{1_1}\| {\bf G}_{1_2})))$ is different from $L({\bf H_1})=R(L({\bf G}_{1_1}))$, while $L({\bf M_2})=L({\bf H_2}).$

Now we need to check the sufficient condition of Theorem~\ref{Theorem:main1}; namely, $L({\bf M})$ is relatively observable with respect to $R(E\cap L({\bf G}))$ and  $R(L({\bf G}))$. If  this condition holds, then we can employ procedure (P1)-(P4) proposed in Section~3.1 to compute the scalable supervisor. By Proposition~\ref{prop:check}, we need to check if for $i=1,2$, $L({\bf H}_i)$ is relatively observable with respect to $R(P_i(E)\cap L({\bf G}_{i_1}\| {\bf G}_{i_2}))$ and  $R(L({\bf G}_{i_1}\| {\bf G}_{i_2}))$. 
We compute $sup\mathcal{CO}(R(P_i(E)\cap L({\bf G}_{i_1}\| {\bf G}_{i_2}))$, $L({\bf H}_i)$ and get  that 
$L({\bf H}_i)$ is relatively observable with respect to $R(P_i(E)\cap L({\bf G}_{i_1}\| {\bf G}_{i_2}))$ and  $R(L({\bf G}_{i_1}\| {\bf G}_{i_2}))$.  We also get that $R(P_i(E))\cap R(L({\bf G}_{i_1}\| {\bf G}_{i_2}))=R(P_i(E))\cap R(L({\bf G}_{i_1}\| {\bf G}_{i_2}))$. Hence by Proposition~\ref{prop:check}, we have that $L({\bf M})$ is relatively observable with respect to $R(E\cap L({\bf G}))$ and  $R(L({\bf G}))$. Therefore the sufficient condition of Theorem~\ref{Theorem:main1} is satisfied.

In procedure (P1)-(P4), we design a scalable supervisor ${\bf SSUP}_o$, displayed in Fig.~\ref{fig:supro}. It can be verified that the scalable supervisor is included in the supremal monolithic one, i.e. $ L({\bf SSUP}_o)\cap L({\bf G})\subseteq L({\bf SUP}_o)$.
 
\begin{figure}
  \centering
  \includegraphics[width=0.7\textwidth]{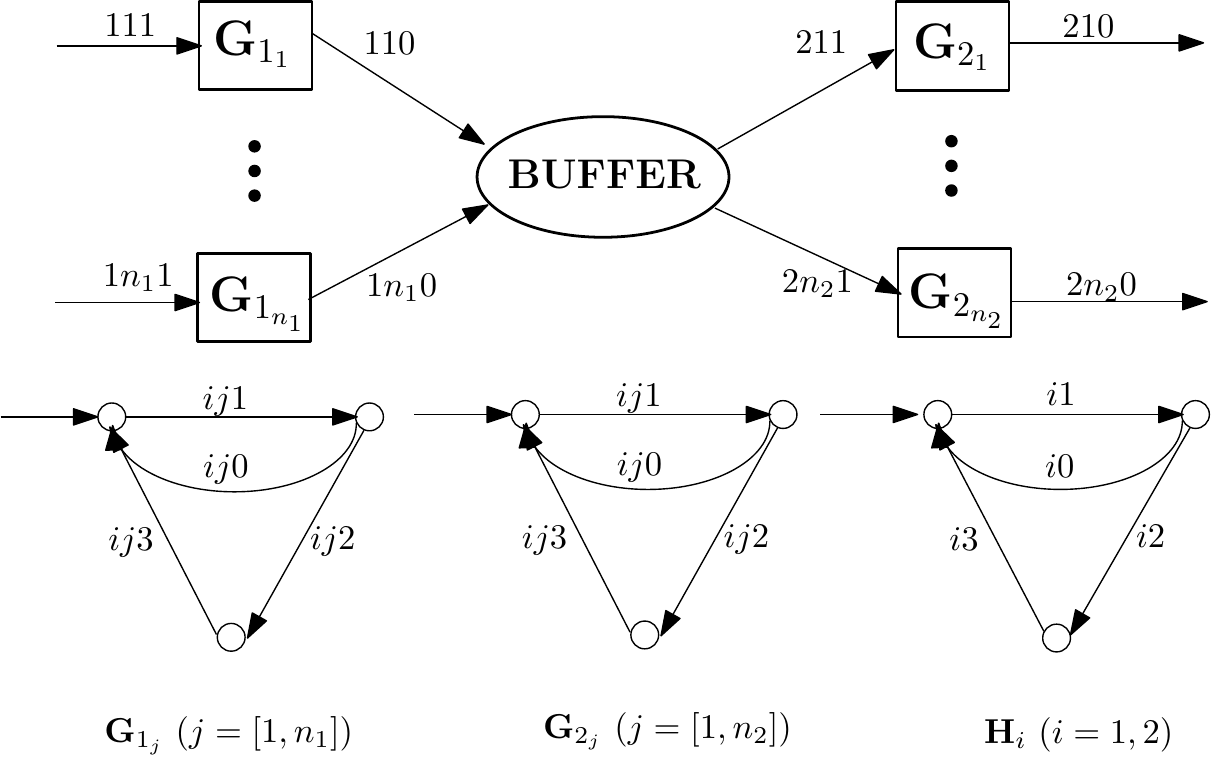}
  \caption{Small factory: system configuration and component agents.  Events $1j1$ ($ j \in [1,\dots,n_1]$)  and $1j2$ ($ j \in [1,\dots,n_2]$) mean that machine ${\bf G}_{i_j}$ starts to work by taking in a workpiece; events $1j0$ and $2j0$ mean that ${\bf G}_{i_j}$ finishes work and outputs a workpiece; events $1j2$ and $2j2$ mean that ${\bf G}_{i_j}$ is broken down; events $1j3$ and $2j3$ mean that ${\bf G}_{i_j}$ is repaired. Convention: the initial state of a generator is labeled by a circle with an entering arrow.
The same notation will be used in subsequent figures.}
\label{fig:SmallFact22}
\end{figure}
\begin{figure}
  \centering
  \includegraphics[width=0.45\textwidth]{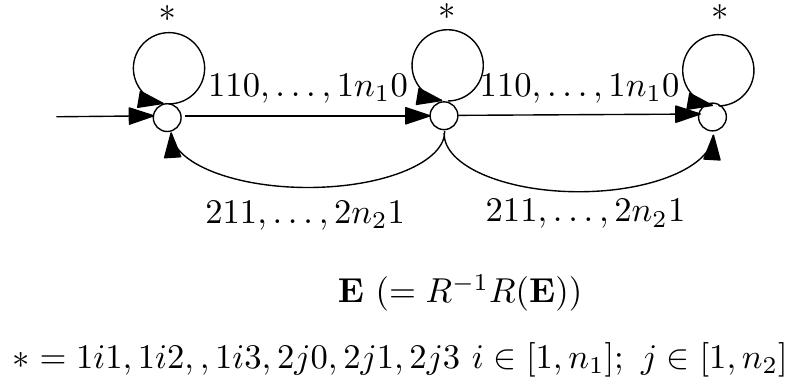}
  \caption{Small factory: templates and specification generator}
\label{fig:spec}
\end{figure}
\end{example} 
\begin{figure}
  \centering
  \includegraphics[width=0.7\textwidth]{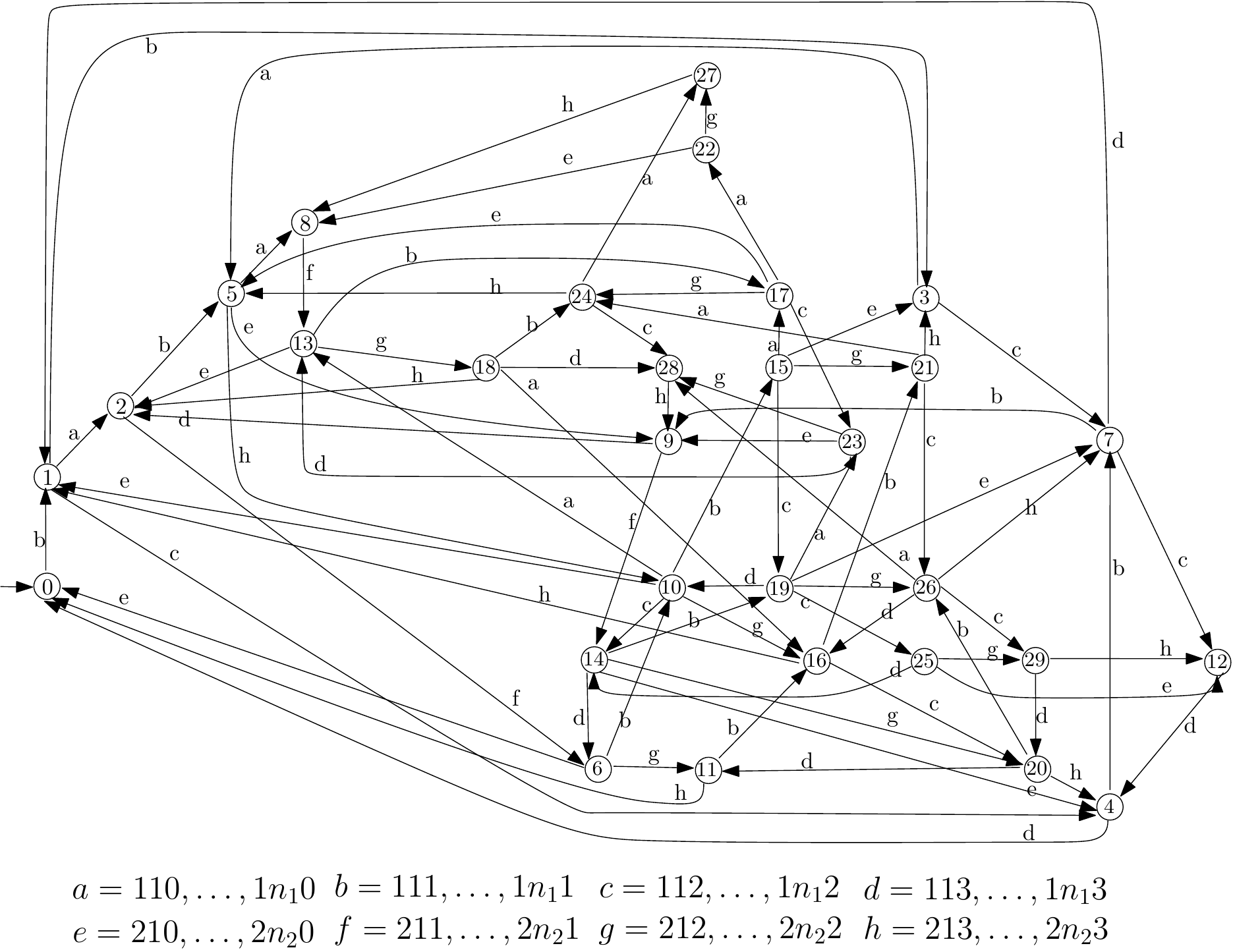}
  \caption{Small factory:  ${\bf SSUP}_o$}
\label{fig:supro}
\end{figure}

\section{Maximal permissiveness of the scalable supervisor}
In the last section, we provid the conditions that the obtained relatively observable sublanguage is a sublanguage of the supremal relatively observable sublanguage, i.e. $ L({\bf SSUP}_o)\cap L({\bf G})\subseteq L({\bf SUP}_o)$. In this suction, we will propose conditions for the opposite inclusion, i.e. the maximal permissiveness of the scalable supervisor.
To give our main result, the following conditions are  needed.
    \smallskip
    \begin{definition}[ Relabeling observational consistency]
        A relabeling map $R$  is said to be {\em Relabeling observation consistent} (ROC) with respect to plant language $L=L({\bf G})\subseteq \Sigma^*$ and natural projection $P$  if for all strings $s\in L$ and $t'\in R(L)$ such that $P_{R}(R(s))=P_{R}(t')$, there exists a
string $s'\in L$ such that $R(s')=t'$ and $P(s)=P(s')$.
  \end{definition}
  \begin{figure}
  \centering
  \includegraphics[width=0.55\textwidth]{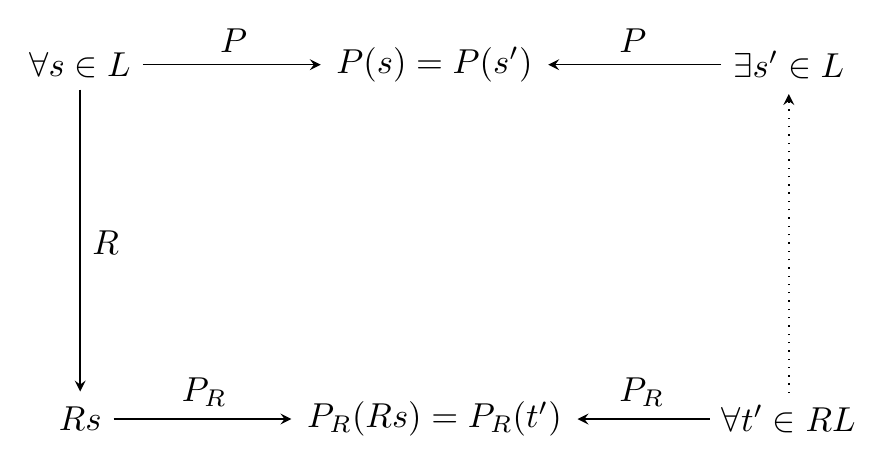}
  \caption{Illustration of ROC condition}
\label{ROC}
\end{figure}
  As we can see from figure \ref{ROC}, ROC condition of a language requires that for every string $s$ in the language and every string $t'$ in the relabeling of this language that looks the same as the relabeling of $s$,
  there exists another string $s'$ in the language that looks the same as $s$ and its relabeling equals $t'$. 
We will see in section 5 that ROC always holds if there are no shared events between different agents in the same group.   
    \smallskip
          \begin{definition}[Local relabeling observer consistency]
          A relabeling map $R$  is said to be {\em Locally relabeling observation consistent} (LROC) with respect to plant language $L=L({\bf G})\subseteq \Sigma^*$ and natural projection $P$  if 
        for all strings $s,s'\in L$ with $P(s)=P(s')$ and unobservable events $b,b'\in \Sigma_{uo}$ such that $R(b)=R(b')$  we have
$$\left( sb\in L \wedge  s'b' \in L \right)\; \Rightarrow \; s'b \in L .$$
  \end{definition}
  
    %
The following result formulates sufficient conditions for maximal permissiveness of the scalable supervisor.
  \begin{theorem}
 \label{maximal permissiveness}
Let $L=L({\bf G})$ be ROC and LROC with respect to relabeling map $R$ and natural projection $P$
and let specification $E\subseteq \Sigma^*$ be  prefix-closed and $({\bf G}, R)$-normal and let $L({\bf M}) \subseteq  RL({\bf G})$.
Then $ L({\bf SUP}_{o}) \subseteq L({\bf SSUP}_{o})\cap L({\bf G})$
\end{theorem}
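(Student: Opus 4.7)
The plan is to unwind the definition
\[
L({\bf SSUP}_o) \;=\; R^{-1}L({\bf RSUP}_o) \;=\; R^{-1}\sup\mathcal{O}(R(E), L({\bf M})),
\]
and reduce the target inclusion $L({\bf SUP}_o)\subseteq L({\bf SSUP}_o)\cap L({\bf G})$ to the single statement
\[
R(L({\bf SUP}_o)) \;\subseteq\; \sup\mathcal{O}(R(E), L({\bf M})).
\]
Since $L({\bf SUP}_o)\subseteq L({\bf G})$ is automatic and $L({\bf SUP}_o)\subseteq R^{-1}R(L({\bf SUP}_o))$ always holds, applying $R^{-1}$ to the displayed inclusion delivers the result. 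To verify it, I would check the two defining properties of $\sup\mathcal{O}(R(E),L({\bf M}))$: $R(L({\bf SUP}_o))\subseteq R(E)$ is immediate from $L({\bf SUP}_o)\subseteq E$, so the real work is to show that $R(L({\bf SUP}_o))$ is $R(E)$-observable with respect to $L({\bf M})$ and $T_o$.

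For the observability check I would fix $t\tau\in R(L({\bf SUP}_o))$, $t'\in R(E)$, $t'\tau\in L({\bf M})$ and $P_R(t)=P_R(t')$, and aim to produce $t'\tau\in R(L({\bf SUP}_o))$. First I would lift $t\tau$ to a string $s\sigma\in L({\bf SUP}_o)$ with $R(s\sigma)=t\tau$, and use $L({\bf M})\subseteq RL({\bf G})$ to view $t'\tau$ as an element of $R(L)$. Since $P_R(R(s\sigma))=P_R(t)P_R(\tau)=P_R(t')P_R(\tau)=P_R(t'\tau)$, the ROC hypothesis applied to the pair $(s\sigma,\,t'\tau)$ produces a string $s^\star=s''b''\in L$ with $R(s'')=t'$, $R(b'')=\tau$, and $P(s\sigma)=P(s''b'')$. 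Because $s''\in L$ (prefix closure) and $R(s'')=t'\in R(E)$, the $({\bf G},R)$-normality of $E$ yields $s''\in E$, hence $s''\in E\cap L$.

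The remaining task is to push $s''$ forward to $s''\sigma\in L({\bf SUP}_o)$. Since $L({\bf SUP}_o)=\sup\mathcal{O}(E\cap L,L)$, this follows from its relative observability once $s''\sigma\in L$ and $P(s)=P(s'')$ are established. I would then split on the observability of $\sigma$. If $\sigma\in\Sigma_o$ (so $\tau\in T_o$, because $R$ preserves observability status), comparing last symbols in the equal strings $P(s)\sigma$ and $P(s'')b''$ forces $b''=\sigma$, giving $s''\sigma=s''b''\in L$ and $P(s)=P(s'')$ at once. If $\sigma\in\Sigma_{uo}$ (so $b''\in\Sigma_{uo}$ as well), the unobservable tails drop out of the projection equality to yield $P(s)=P(s'')$, and the LROC hypothesis applied to $s,s''\in L$ and the unobservable events $\sigma,b''$ with $R(\sigma)=R(b'')=\tau$ together with $s\sigma,s''b''\in L$ supplies $s''\sigma\in L$. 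In either case, relative observability of $L({\bf SUP}_o)$ delivers $s''\sigma\in L({\bf SUP}_o)$, whence $R(s''\sigma)=t'\tau\in R(L({\bf SUP}_o))$, completing the observability step.

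The main obstacle is the unobservable case: ROC alone produces a witness $s''b''$ whose observable projection matches $s\sigma$, but the unobservable event $b''$ need not coincide with $\sigma$, and relative observability demands exactly the same triggering event at both strings. LROC is precisely the axiom that lets us swap $b''$ for $\sigma$ inside $L$ without disturbing observable projections, and this is where it is essential. A smaller but necessary subtlety is placing $s''$ in $E$ rather than merely in $L$, for which the $({\bf G},R)$-normality of $E$ is the right tool given $t'\in R(E)$.
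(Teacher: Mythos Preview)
Your proposal is correct and follows essentially the same route as the paper's proof: reduce to showing $R(L({\bf SUP}_o))\subseteq \sup\mathcal{O}(R(E),L({\bf M}))$, verify the inclusion into $R(E)$ trivially, then prove relative observability of $R(L({\bf SUP}_o))$ by lifting $t\tau$ via ROC, using $({\bf G},R)$-normality to place the lifted prefix in $E$, and splitting on the observability of the last event with LROC handling the unobservable case. The only cosmetic difference is that the paper first invokes antimonotonicity to replace $L({\bf M})$ by $RL({\bf G})$ in the relative observability check, whereas you use the inclusion $L({\bf M})\subseteq RL({\bf G})$ directly at the point where ROC is applied; the two organizations are equivalent.
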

\begin{proof}
We need to show that
$$\supro (E\cap L({\bf G}),L({\bf G}))  \subseteq  R^{-1}\supro (R(E),L({\bf M})) \cap L({\bf G}) .$$
Since by definition  $\supro (E\cap L({\bf G}),L({\bf G}))  \subseteq E\cap L({\bf G}) $,
it suffices to show that 
$R(\supro (E\cap L({\bf G}),L({\bf G})) ) \subseteq  \supro (R(E),L({\bf M}))$.
Note that due to the assumption $L({\bf M}) \subseteq  RL({\bf G})$ every language $N\subseteq T^*$ that is relatively observable wrt $R(E)$ and $RL({\bf G})$  is also relatively observable wrt $R(E)$ and $L({\bf M}))$.
Therefore,
$\supro (R(E),RL({\bf G}))  \subseteq \supro (R(E),L({\bf M}))$, a standard antimonotonicity property of supremal sublanguages of supervisory control in the plant argument.
Thus, we only need to show
$R(\supro (E\cap L({\bf G}),L({\bf G}))  )\subseteq \supro (R(E),RL({\bf G})) $.
The latter inclusion then amounts to show that \\
(i) $R(\supro (E\cap L({\bf G}),L({\bf G})))  \subseteq R(E)$, which is obvious, and\\
(ii) $R(\supro (E\cap L({\bf G}),L({\bf G})))$ is relatively observable wrt $R(E)$ and $RL({\bf G})$.
To show (ii), 
let $t,t'\in T^*$ be such that $P_R(t)=P_R(t')$, and let $a\in T$ be such that $ta \in R\supro (E\cap L({\bf G}),L({\bf G}))$, $t'\in R(E)$, and $t'a \in RL({\bf G})$. We have to show that $t'a\in R(\supro (E\cap L({\bf G}),L({\bf G})))$. 

From $ta\in R(\supro (E\cap L({\bf G}),L({\bf G})))$ we have that there exists \\
$sb\in \supro (E\cap L({\bf G}),L({\bf G}))$ such that $R(sb)=ta$. Since $t'a \in RL({\bf G})$ and $P_R(R(sb)) = P_R(t'a)$, ROC (applied to $sb$ playing the role of $s$ and $t'a$ playing the role of $t'$) implies that 
there is a $v' \in L({\bf G})$ such that $R(v') = t'a$ and $P(v')=P(sb)$. Then $v'=s'b'$ for some $s'\in L({\bf G})$ and $b'\in \Sigma$ such that $R(b')=a$ and $P(b')=P(b)$. 
This means that if $b'\in \Sigma_o$ then $b'=b$.
Note that $R(s')=t'\in  R(E)$, i.e. by normality assumption (A2) we have 
$s'\in  R^{-1}R(E) \cap L({\bf G}) =E$.
Hence, in case $b=b'\in \Sigma_o$ w have that
$sb\in \supro (E\cap L({\bf G}),L({\bf G}))$, $s'\in E\cap L({\bf G})$, $s'b\in L({\bf G}))$, and $P(s')=P(s)$.
Thus, from relative observability of $sb\in \supro (E\cap L({\bf G}),L({\bf G}))$ with respect to
$E\cap L({\bf G})$ and $L({\bf G})$ we obtain 
$s'b\in \supro (E\cap L({\bf G}),L({\bf G}))$, whence $t'a=R(s'b)\in R(\supro (E\cap L({\bf G}),L({\bf G})))$.

Now, if both $b,b'\in \Sigma_{uo}$ are unobservable, then we 
 have  
$s,s'\in L({\bf G}) \wedge P(s')=P(s) \wedge sb \in L({\bf G}) \wedge s'b'\in L({\bf G}) \wedge R(b)=R(b') $.
From LROC we get  $s'b \in L({\bf G})$ as well. Thus,  we can continue in the same way as with
 $b,b'$  both observable.

\end{proof}

Note that the condition $L({\bf M}) \subseteq  RL({\bf G})$ in Theorem~\ref{maximal permissiveness} always holds under the assumption (SEF), which is shown by the following lemmas.  
 \begin{lemma}
 \label{rp-relation}
Let (SEF) hold. For any string $s\in \Sigma_{i}^*$, natural projections $P_i:\Sigma^*\rightarrow \Sigma_i^*$, $P_{R,i}:T^*\rightarrow T_i^*$, and the relabeling map $R:\Sigma^*\rightarrow T^*$, we have $RP_i^{-1}(s)=P_{R,i}^{-1}R(s)$.
\end{lemma}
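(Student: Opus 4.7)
The plan is to prove both inclusions of the set equality by first establishing the commutation identity
\[
P_{R,i} \circ R \;=\; R \circ P_i \quad\text{as functions } \Sigma^*\to T_i^*,
\]
and then using it directly for one direction and with a constructive interleaving argument for the other. Before starting, I record the key structural fact from the setup (already used implicitly in the paper): $R$ preserves the group status of events, so $R(\sigma)\in T_i$ if and only if $\sigma\in\Sigma_i$, and $R(\Sigma_i)=T_i$.

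First I would verify the commutation identity on single events $\sigma\in\Sigma$ by two cases. If $\sigma\in\Sigma_i$ then $P_i(\sigma)=\sigma$, and $R(\sigma)\in T_i$, so $P_{R,i}(R(\sigma))=R(\sigma)=R(P_i(\sigma))$. If $\sigma\notin\Sigma_i$ then $P_i(\sigma)=\varepsilon$, and $R(\sigma)\notin T_i$ by the status-preservation requirement, so $P_{R,i}(R(\sigma))=\varepsilon=R(\varepsilon)=R(P_i(\sigma))$. Extending by concatenation (both $R$ and the natural projections are morphisms) gives the identity on $\Sigma^*$.

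The inclusion $R(P_i^{-1}(s))\subseteq P_{R,i}^{-1}(R(s))$ is then immediate: for any $w\in P_i^{-1}(s)$, $P_{R,i}(R(w))=R(P_i(w))=R(s)$, so $R(w)\in P_{R,i}^{-1}(R(s))$.

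The reverse inclusion is where the real work lies. Given $t\in T^*$ with $P_{R,i}(t)=R(s)$, I must exhibit $w\in\Sigma^*$ satisfying $R(w)=t$ and $P_i(w)=s$. Writing $s=s_1s_2\cdots s_n$ with each $s_k\in\Sigma_i$, the projection condition forces $t$ to have the shape
\[
t \;=\; u_0\,R(s_1)\,u_1\,R(s_2)\,u_2\cdots R(s_n)\,u_n,
\]
where each $u_k\in (T\setminus T_i)^*$. For each letter $\tau$ appearing in some $u_k$, surjectivity of $R$ together with the status-preservation property yields a preimage $\sigma\in\Sigma\setminus\Sigma_i$ with $R(\sigma)=\tau$ (any preimage works, since $\tau\notin T_i$ forces $\sigma\notin\Sigma_i$). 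Interleaving these chosen preimages with $s_1,\ldots,s_n$ in the same order produces a string $w\in\Sigma^*$ satisfying $R(w)=t$ and $P_i(w)=s_1\cdots s_n=s$, which gives $t\in R(P_i^{-1}(s))$.

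The main obstacle is the $\supseteq$ direction; specifically, the need to ensure that the preimages chosen for letters of the $u_k$ blocks actually lie outside $\Sigma_i$, so that $P_i$ erases them. This is handled cleanly by the bijective correspondence between group membership in $\Sigma$ and $T$ guaranteed by the status-preservation requirement (which fits naturally with (SEF), under which $\Sigma_i$ is a disjoint union $\bigsqcup_j\Sigma_{ij}$ and $R$ acts as the template-level abstraction on each piece).
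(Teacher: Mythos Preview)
Your proof is correct. The paper's own argument is different in style: it proceeds by induction on the length of $s$, using the multiplicativity of inverse projections and of $R$ on languages (namely $P_i^{-1}(s\sigma)=P_i^{-1}(s)P_i^{-1}(\sigma)$ and $R(AB)=R(A)R(B)$) to reduce to the single-event case $\sigma\in\Sigma_i$, where it computes both sides explicitly as $(T\setminus T_i)^*R(\sigma)(T\setminus T_i)^*$. Your approach instead proves the two inclusions separately: the forward one via the commutation identity $P_{R,i}\circ R=R\circ P_i$, and the backward one by an explicit interleaving construction. Both arguments ultimately rest on the same structural fact, the status-preservation requirement $R(\sigma)\in T_i\Leftrightarrow\sigma\in\Sigma_i$ (and hence $R(\Sigma\setminus\Sigma_i)=T\setminus T_i$); the paper uses it to evaluate $R((\Sigma\setminus\Sigma_i)^*)$, while you use it to guarantee that preimages of letters in $u_k$ can be chosen outside $\Sigma_i$. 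Your version is a bit more transparent about where this hypothesis enters, whereas the paper's inductive framing is more compact once one accepts the factorization of $P_i^{-1}$ over concatenation.
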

\begin{proof}
By induction, intuitively $RP_i^{-1}(\epsilon)=P_{R,i}^{-1}R(\epsilon)=\epsilon$. Suppose that $RP_i^{-1}(s)=P_{R,i}^{-1}R(s)$ holds. Then for any $\sigma\in \Sigma_i$, we need to show $RP_i^{-1}(s\sigma)=P_{R,i}^{-1}R(s\sigma)$. Since  $RP_i^{-1}(s\sigma)=R(P_i^{-1}(s)P_i^{-1}(\sigma))=RP_i^{-1}(s)RP_i^{-1}(\sigma)$ and $P_{R,i}^{-1}R(s\sigma)=P_{R,i}^{-1}(R(s)R(\sigma))=P_{R,i}^{-1}R(s)P_{R,i}^{-1}R(\sigma)$. By recalling $RP_i^{-1}(s)=P_{R,i}^{-1}R(s)$, we only need to show $RP_i^{-1}(\sigma)=P_{R,i}^{-1}R(\sigma)$. Since $P_i:\Sigma^*\rightarrow \Sigma_i^*$ and $\sigma\in \Sigma_i$, we have that $RP_i^{-1}(\sigma)=R((\Sigma\setminus \Sigma_i)^*\sigma (\Sigma\setminus \Sigma_i)^*)=(T\setminus T_i)^*R(\sigma)(T\setminus T_i)^*$. Similarly, $P_{R,i}:T^*\rightarrow  T_i^*$, so we get $P_{R,i}^{-1}R(\sigma)=(T\setminus T_i)^*R(\sigma)(T\setminus T_i)^*$. Thus,  $RP_i^{-1}(s)=P_{R,i}^{-1}R(s)$.
\end{proof}

 \begin{lemma}
 \label{mg-inclution}
$L({\bf M}) \subseteq  RL({\bf G})$ always holds under the assumption (SEF).
\end{lemma}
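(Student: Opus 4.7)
The plan is to show $L({\bf M})\subseteq RL({\bf G})$ via a chain of inclusions that uses (SEF) to pad the reduced groups up to the full groups, Lemma~\ref{rp-relation} to commute inverse projection with relabeling, and a constructive lifting at the end.

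First I would use (SEF) to dispose of the gap between $k_i$ and $n_i$. Since $\Sigma_{i_1},\dots,\Sigma_{i_{n_i}}$ are pairwise disjoint within group $i$ and each $L({\bf G}_{i_j})$ contains $\varepsilon$, the extra agents ${\bf G}_{i_{k_i+1}},\dots,{\bf G}_{i_{n_i}}$ can remain idle in the synchronous product without forbidding any already-present behaviour; hence $L(\|_{j=1}^{k_i}{\bf G}_{i_j})\subseteq L(\|_{j=1}^{n_i}{\bf G}_{i_j})=L({\bf G}_i)$. Applying $R$ and taking $\|$ over $i$ gives $L({\bf M})=\|_{i=1}^{l}R(L(\|_{j=1}^{k_i}{\bf G}_{i_j}))\subseteq\|_{i=1}^{l}R(L({\bf G}_i))$.

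Next I would rewrite the right-hand synchronous product as an intersection of inverse projections and use Lemma~\ref{rp-relation} extended to languages (both $R$ and the inverse projections distribute over arbitrary unions) to obtain
\begin{equation*}
\|_{i=1}^{l}R(L({\bf G}_i))=\bigcap_{i=1}^{l}P_{R,i}^{-1}R(L({\bf G}_i))=\bigcap_{i=1}^{l}RP_i^{-1}(L({\bf G}_i)).
\end{equation*}
The crux is then to pull $R$ outside the intersection, i.e.\ to establish
\begin{equation*}
\bigcap_{i=1}^{l}RP_i^{-1}(L({\bf G}_i))\subseteq R\Bigl(\bigcap_{i=1}^{l}P_i^{-1}(L({\bf G}_i))\Bigr)=RL({\bf G}).
\end{equation*}
Given $t$ in the left-hand side, for each $i$ there is an $s_i\in\Sigma^{*}$ with $R(s_i)=t$ and $P_i(s_i)\in L({\bf G}_i)$; I would build a single common preimage $s\in L({\bf G})$ of $t$ letter by letter. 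At position $k$, local-status preservation of $R$ forces every $\sigma\in R^{-1}(\tau_k)$ to lie in exactly those $\Sigma_i$ with $\tau_k\in T_i$, and (SEF) further places $\sigma$ into a unique agent within each such group. The template condition $R({\bf G}_{i_j})={\bf H}_i$ makes the agents in a group interchangeable, so the event picked at position $k$ can be routed through an agent that is compatible with the choices already made at other positions and in the other groups sharing $\tau_k$.

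The main obstacle is exactly this last step: in general relabeling does not commute with intersection, and a naive diagonal combination of the $s_i$'s need not land in $L({\bf G})$. What rescues the argument is the interplay of (SEF) with the template condition, which supplies enough symmetry among the agents of a group that shared letters of $t$ can be re-routed through mutually compatible agents across different groups. The bulk of the technical work consists in making this re-routing explicit and verifying that the resulting projections $P_i(s)$ remain in $L({\bf G}_i)=\|_{j=1}^{n_i}L({\bf G}_{i_j})$ for every $i$.
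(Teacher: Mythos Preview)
Your plan is sound and you correctly isolate the one non-formal step (commuting $R$ past an intersection), but the route you take differs from the paper's and makes that step heavier than necessary. The paper does \emph{not} first enlarge the groups from $k_i$ to $n_i$ agents and then try to pull $R$ outside $\bigcap_iP_i^{-1}(L({\bf G}_i))$. Instead, for a given $t\in L({\bf M})$ it immediately picks, for each $i$, a single witness $s_i\in\Sigma_i^*$ with $s_i\in\|_{j=1}^{k_i}L({\bf G}_{i_j})$ and $R(s_i)=P_{R,i}(t)$, and works with the synchronous product of these \emph{singleton} languages. From $\|_i\{s_i\}\subseteq\|_i\|_{j\le k_i}L({\bf G}_{i_j})$ one gets $R(\|_i\{s_i\})\subseteq R(\|_i\|_{j\le k_i}L({\bf G}_{i_j}))$, and Lemma~\ref{rp-relation} applied to each $s_i$ yields $t\in\|_iP_{R,i}(t)=\|_iR(s_i)=\bigcap_iP_{R,i}^{-1}R(s_i)=\bigcap_iRP_i^{-1}(s_i)$, which the paper identifies with $R(\|_i\{s_i\})$. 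Only at the very end is (SEF) invoked, to pass from $k_i$ to $n_i$ agents inside the $R$-image.

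The practical difference is that the paper never confronts the general problem of commuting $R$ with an intersection of full group languages, and it never appeals to the template isomorphism $R({\bf G}_{i_j})={\bf H}_i$ or to any agent ``re-routing''. Those ingredients of your sketch are extra machinery the intended proof does not use; the difficulty you flag is handled in the paper by reducing to singletons and invoking Lemma~\ref{rp-relation} directly, not by exploiting symmetry among agents. If you insist on your order of operations (SEF first, then commute), the cleanest way to discharge the crux step is still the paper's string-level argument applied to witnesses $s_i\in L({\bf G}_i)$ --- the template condition is not what closes the gap.
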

\begin{proof}
For any string $t\in L({\bf M})\subseteq T^*$ we need to show $t\in RL({\bf G})$. By $t\in L({\bf M})=\|_{i=1}^{l} L({\bf M}_i)=\bigcap_{i=1}^{l} P_{R,i}^{-1}(L({\bf M}_i))$, we have $P_{R,i}(t)\in L({\bf M}_i)=R(\|_{j=1}^{k_i} L({\bf G}_{ij}) )$. Then there exist strings $s_i\in \Sigma_i^*$ such that  $s_i\in \|_{j=1}^{k_i} L({\bf G}_{ij}) $ and $P_{R,i}(t)=R(s_i)$. Then we get that $\|_{i=1}^{l} s_i\in \|_{i=1}^{l} \|_{j=1}^{k_i} L({\bf G}_{ij}) $. Since 
$\|_{i=1}^{l} s_i=\bigcap_{i=1}^{l} P_{i}^{-1}(s_i)$, we obtain $\bigcap_{i=1}^{l} P_{i}^{-1}(s_i)\in \|_{i=1}^{l} \|_{j=1}^{k_i}L({\bf G}_{ij}) $. We apply $R$ on both sides, so we get that 
$R(\bigcap_{i=1}^{l} P_{i}^{-1}(s_i))\in R(\|_{i=1}^{l} \|_{j=1}^{k_i}L({\bf G}_{ij})) $. It follows by lemma~\ref{rp-relation} that
$$R(\bigcap_{i=1}^{l} P_{i}^{-1}(s_i))=\bigcap_{i=1}^{l} R(P_{i}^{-1}(s_i))=\bigcap_{i=1}^{l}P_{R,i}^{-1}R(s_i) \in R(\|_{i=1}^{l} \|_{j=1}^{k_i}L({\bf G}_{ij})) .$$ We know that  $\bigcap_{i=1}^{l}P_{R,i}^{-1}R(s_i)=\|_{i=1}^{l}R(s_i)=\|_{i=1}^{l}P_{R,i}(t)\supseteq t$ by $R(s_i)=P_{R,i}(t)$. We recall that $R(\|_{i=1}^{l} \|_{j=1}^{k_i}L({\bf G}_{ij}))\subseteq R(\|_{i=1}^{l} \|_{j=1}^{n_i}L({\bf G}_{ij}))=R(L({\bf G}))$ always holds under assumption (SEF). We thus get that $s\in R(L({\bf G}))$.  Therefore, $L({\bf M}) \subseteq  RL({\bf G})$ is proved under the assumption (SEF).
\end{proof}

\subsection{observability between the template level and the original system level}
For scalability reasons the controller synthesis  should be done only at the template level, in this subsection we will study the preservation of the property that is essential for computing supervisors under partial observation, i.e. observability between the template level and the original system level.
Before giving our result, we need the following Lemma known as transitivity of observability, which will be used to proof our main result (Theorem~\ref{main2}).
\begin{lemma} \label{transitivity}
Suppose $K\subseteq N\subseteq L=L({\bf G})\subseteq \Sigma^*$. If $K$ is observable with respect to $N$ and $N$ is observable with respect to $L$, then $K$ is observable with respect to $L$.
 \end{lemma}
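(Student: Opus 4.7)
The plan is to mimic the proof of Lemma~\ref{transitivityRO}, but simplified, since ordinary observability carries no auxiliary reference sublanguage. First I would unpack what must be shown: fix strings $s, s' \in K$ and an event $\sigma \in \Sigma$, and suppose $P(s) = P(s')$, $s\sigma \in K$, and $s'\sigma \in L$. The goal is to derive $s'\sigma \in K$, which is precisely observability of $K$ with respect to $L$ at this instance.

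Next I would promote these hypotheses into $N$ using the inclusion $K \subseteq N$. In particular, $s, s' \in K \subseteq N$ and $s\sigma \in K \subseteq N$. Combined with $P(s) = P(s')$ and $s'\sigma \in L$, observability of $N$ with respect to $L$ applied to the pair $(s, s')$ and $\sigma$ yields $s'\sigma \in N$. This is the single ``intermediate'' conclusion that bridges the two levels.

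Finally, I would invoke observability of $K$ with respect to $N$: we now have $s, s' \in K$, $P(s) = P(s')$, $s\sigma \in K$, and, by the previous step, $s'\sigma \in N$. The definition of observability of $K$ relative to $N$ then forces $s'\sigma \in K$, completing the proof.

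I do not foresee any real obstacle here; the argument is a straightforward chase through the definitions that inserts the intermediate language $N$ exactly once. The only points requiring minor care are (i) verifying that at each invocation of the observability axiom all the needed ingredients are in the appropriate language (crucially, $s'\sigma \in L$ is required when using observability of $N$, and $s'\sigma \in N$ is required when using observability of $K$), and (ii) noting that the prefix-closure of $K$ and $N$, implicit in the paper's setting via assumption (A1), ensures that $s, s'$ remain in the relevant language once we know $s\sigma$ and eventually $s'\sigma$ do. No additional hypotheses beyond $K \subseteq N \subseteq L$ are needed.
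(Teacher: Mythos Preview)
Your proposal is correct and follows essentially the same two-step argument as the paper's proof: first lift $s,s',s\sigma$ into $N$ via $K\subseteq N$ and use observability of $N$ in $L$ to obtain $s'\sigma\in N$, then apply observability of $K$ in $N$ to conclude $s'\sigma\in K$. The only cosmetic difference is that the paper writes the argument with closures $\overline{K},\overline{N}$ while you work with $K,N$ directly and appeal to prefix-closedness (A1); this is inessential.
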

\begin{proof} Let $s,\ s'\in \overline{K}$, $P(s)=P(s')$, $b\in \Sigma$, $sb\in \overline{K}$, and $s'b\in L$. We need to show that $s'b\in \overline{K}$. Since $K\subseteq N$, i.e. $\overline{K}\subseteq \overline{N}$ as well, we have $s,\ s'\in \overline{N}$. It now follows form the observability of $N$ with respect to $L$ that $s'b\in \overline{N}$. Finally, it is obtained from the observability of $K$ with respect to $N$ that $s'b\in \overline{K}$, which shows that $K$ is observable with respect to $L$. 
\end{proof}
Now  the result about preserving  observability is ready to be stated. We assume that the specification $E$ and the plant $L({\bf G})$ are non conflicting, i.e. $\overline{E\cap L({\bf G})}=\overline{E}\cap \overline{L({\bf G})}$.

  \begin{theorem}
  \label{main2}
    Let $L$ be a generator language over an event set $\Sigma$ with relabeling $R(L)$ over an event set $T$ and let $E\subseteq \Sigma^*$ be a  $({\bf G},R)$-normal specification and
$L({\bf M})$ is observable with respect to $R(L({\bf G}))$ and $P_R$.  If  $R(E)$ is observable with respect to $L({\bf M})$ and $P_R$, then $E\cap L({\bf G})$ is observable with respect to $L({\bf G})$ and $P$.
  \end{theorem}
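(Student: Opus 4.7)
The plan is to combine Lemma~\ref{transitivity} at the template alphabet $T$ with a pull-back via the relabeling $R$, exploiting the commutativity $P_R\circ R = R\circ P$ from Fig.~\ref{fig:nr} and $({\bf G},R)$-normality of the specification.

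First, I note that the two observability hypotheses automatically provide the inclusions $R(E)\subseteq L({\bf M}) \subseteq R(L({\bf G}))$ needed in Lemma~\ref{transitivity} (observability of $X$ with respect to $Y$ presumes $X\subseteq Y$). Applying Lemma~\ref{transitivity} with $K=R(E)$, $N=L({\bf M})$, and $L=R(L({\bf G}))$ yields that $R(E)$ is observable with respect to $R(L({\bf G}))$ and $P_R$. This reduces the task to transferring observability from the template alphabet back to the original alphabet.

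Next, I would verify observability of $E\cap L({\bf G})$ directly from the definition. Since $E$ is prefix-closed (assumption A1) and $L({\bf G})$ is a generator language, and $E$ and $L({\bf G})$ are assumed non-conflicting, $\overline{E\cap L({\bf G})}=E\cap L({\bf G})$. So take $s,s'\in E\cap L({\bf G})$ with $P(s)=P(s')$ and $\sigma\in\Sigma$ such that $s\sigma\in E\cap L({\bf G})$ and $s'\sigma\in L({\bf G})$; the aim is $s'\sigma\in E$. Applying the morphism $R$ gives $R(s),R(s')\in R(E)$, $R(s\sigma)=R(s)R(\sigma)\in R(E)$, and $R(s'\sigma)=R(s')R(\sigma)\in R(L({\bf G}))$. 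The commutative diagram gives $P_R(R(s))=R(P(s))=R(P(s'))=P_R(R(s'))$. Invoking observability of $R(E)$ with respect to $R(L({\bf G}))$ and $P_R$, which we just derived, yields $R(s'\sigma)\in R(E)$, i.e.\ $s'\sigma\in R^{-1}R(E)$.

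Finally, combining $s'\sigma\in R^{-1}R(E)$ with $s'\sigma\in L({\bf G})$ and applying $({\bf G},R)$-normality of $E$ (assumption A2) yields $s'\sigma\in R^{-1}R(E)\cap L({\bf G})\subseteq E$, so $s'\sigma\in E\cap L({\bf G})$, establishing the desired observability. The main obstacle is conceptual rather than technical: one must spot that the right intermediate object is observability of $R(E)$ with respect to $R(L({\bf G}))$ (obtained for free from Lemma~\ref{transitivity}), and then realize that $({\bf G},R)$-normality is exactly what converts membership in the pre-image $R^{-1}R(E)$ into membership in $E$. Once this strategy is identified, the remaining steps are routine morphism and projection bookkeeping, since $R$ sends letters to letters and preserves concatenation.
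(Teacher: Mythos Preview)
Your proof is correct and follows essentially the same route as the paper's own argument: both apply Lemma~\ref{transitivity} to upgrade observability of $R(E)$ from $L({\bf M})$ to $R(L({\bf G}))$, push the observability test through $R$ using $P_R\circ R=R\circ P$, and then close with $({\bf G},R)$-normality to pull back into $E$. The only cosmetic differences are that you invoke transitivity up front (the paper does it mid-proof) and you simplify by noting $\overline{E\cap L({\bf G})}=E\cap L({\bf G})$ from prefix-closedness, whereas the paper carries the closure bars throughout.
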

  \begin{proof}
Let $R(E)$ be observable with respect to $L({\bf M})$ and $P_R$.
		It will be shown that $E\cap L({\bf G})$ is observable with respect to $L({\bf G})$ and $P$.
		Assume that $s,s'\in \overline{E\cap L({\bf G})}$, for some
		$b\in \Sigma$, such that  $sb\in \overline{E\cap L({\bf G})}$, $s'b\in L({\bf G})$, and $P(s)=P(s')$. We have to show that
		$s'b\in \overline{E\cap L({\bf G})}$.
 We have then $R(s),\ R(s') \in R(\overline{E\cap L({\bf G})})=\overline{R(E\cap L({\bf G}))}\subseteq \overline{R(E)}$, $R(P(s))=R(P(s'))$.  Let us denote $R(b)=a\in T$, then $R(s)a\in  \overline{R(E\cap L({\bf G}))}\subseteq \overline{R(E)}$ and $R(s')a\in  R(L({\bf G}))$.  Since $R$ preserves observability status of events in $\Sigma$, we have $P_R(R(s))=P_R(R(s'))$.  

From Lemma~\ref{transitivity} it follows that $R(E)$ is observable with respect to $R(L({\bf G}))$ and $P_R$.
Using this observability we conclude that $R(s')a\in  \overline{R(E)}$. It  implies that $s'b\in R^{-1}R(s')a\subseteq R^{-1}\overline{R(E)}=\overline{R^{-1}R(E)}$. We thus have $s'b\in \overline{R^{-1}R(E)}\cap L({\bf G})=\overline{R^{-1}R(E) \cap L({\bf G})}=\overline{E \cap L({\bf G})}$ by  $({\bf G},R)$-normality of $E$ assumption. Now the observability of $E$ with respect to $L({\bf G})$ is proved. 
  \end{proof} 
 \section{Efficient Verification of Sufficient Conditions in Theorem~\ref{Theorem:main1} and Theorem~\ref{maximal permissiveness}  }. In this section we use assumption(SEF), i.e. shared events between different agents in the same group are excluded.
We first address the verification of the sufficient condition
used in Theorem~\ref{Theorem:main1}. As shown in section~\ref{mainresult}, this condition requires computation of ${\bf G}$ which is dependent on the number of agents. Thus Proposition~\ref{prop:check} is proposed to avoid computing ${\bf G}$.  To prove Proposition~\ref{prop:check} we need the following lemmas and propositions. 


\begin{lemma} \label{Lemma:check0}
  For each group $i \in \{1,\ldots,l\}$ and arbitrary ${\bf G}_{i_1},{\bf G}_{i_2}, \in \mathcal{G}_i$,  if $L({\bf H}_{i})$ is relatively observable  with respect to $R(P_i(E)\cap L({\bf G}_{i_1}\|{\bf G}_{i_2}))$ and \\$R(L({\bf G}_{i_1}\|{\bf G}_{i_2}))$, then $L({\bf H}_{i})$ is relatively observable  with respect to $R(P_i(E)\cap L({\bf G}_{i_1}\|{\bf G}_{i_2}\|{\bf G}_{i_3}))$ and  $R(L({\bf G}_{i_1}\|{\bf G}_{i_2}))$.
 \end{lemma}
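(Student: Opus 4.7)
The plan is to exploit that the hypothesis and the conclusion share the same plant language $R(L({\bf G}_{i_1}\|{\bf G}_{i_2}))$, so only the (larger) reference language in the conclusion has to be reconciled with the (smaller) reference language of the hypothesis on the strings that actually matter. Concretely, I would unfold the definition of relative observability for the conclusion, and for each witness produced manufacture an equivalent witness living in the hypothesis' reference; the two-agent hypothesis then closes the argument.

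In detail, take arbitrary $ta\in L({\bf H}_i)$, $t'\in R(P_i(E)\cap L({\bf G}_{i_1}\|{\bf G}_{i_2}\|{\bf G}_{i_3}))$, $t'a\in R(L({\bf G}_{i_1}\|{\bf G}_{i_2}))$ with $P_R(t)=P_R(t')$; the goal is $t'a\in L({\bf H}_i)$. The definition of $R$ in Section~\ref{pre} shows that $R$ is letter-to-letter (each $\sigma\in\Sigma$ maps to a single event in $T$), so from $t'a\in R(L({\bf G}_{i_1}\|{\bf G}_{i_2}))$ one extracts $w'\sigma_0\in L({\bf G}_{i_1}\|{\bf G}_{i_2})$ with $R(w')=t'$ and $R(\sigma_0)=a$; in particular $w'\in L({\bf G}_{i_1}\|{\bf G}_{i_2})\subseteq L({\bf G})$.

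The key step is to upgrade $t'\in R(P_i(E)\cap L({\bf G}_{i_1}\|{\bf G}_{i_2}\|{\bf G}_{i_3}))$ to $t'\in R(P_i(E)\cap L({\bf G}_{i_1}\|{\bf G}_{i_2}))$ by showing $w'\in P_i(E)$. Since $R(w')=t'$ is the $R$-image of some $v\in P_i(E)$, we have $w'\in R^{-1}R(P_i(E))$; combined with $w'\in L({\bf G})$ and a projected form of the $({\bf G},R)$-normality (A2) restricted to group $i$, which is available under (SEF) because the relabeling respects the group decomposition (each $\sigma\in\Sigma_{ij}$ satisfies $R(\sigma)\in T_i$, so an inverse-relabeling of a string in $R(P_i(E))$ cannot cross into another group), we conclude $w'\in P_i(E)$. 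Therefore $t'=R(w')\in R(P_i(E)\cap L({\bf G}_{i_1}\|{\bf G}_{i_2}))$, and the two-agent relative observability hypothesis applied to the quadruple $(ta,\, t',\, t'a,\, P_R(t)=P_R(t'))$ immediately yields $t'a\in L({\bf H}_i)$.

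The main obstacle is establishing the projected normality inclusion $R^{-1}R(P_i(E))\cap L({\bf G})\cap\Sigma_i^*\subseteq P_i(E)$. Under (SEF) the within-group alphabets are pairwise disjoint and $R$ respects both the group partition and the partition $\Sigma=\Sigma_o\dot\cup\Sigma_{uo}$, which should propagate $({\bf G},R)$-normality of $E$ down to the group-level projection $P_i(E)$. If this inheritance is not immediately available, an alternative is to lift $w'$ to a full plant trace by shuffling with the traces of the other groups extracted from some $s\in E$ with $P_i(s)=v$ and $R(v)=t'$, apply (A2) directly to this lifted trace, and then project back through $P_i$; either route reduces the three-agent setting to the two-agent hypothesis without ever computing ${\bf G}$ itself.
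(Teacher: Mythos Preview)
The paper's own proof of this lemma is a single sentence: ``It follows directly from definition of relative observability.'' The intended argument is almost certainly the elementary anti-monotonicity of relative observability in the reference argument: since the plant $L=R(L({\bf G}_{i_1}\|{\bf G}_{i_2}))$ is unchanged between hypothesis and conclusion, and since in the definition the clause $s'\sigma\in L$ already forces $s'\in L$, the conclusion only has to be checked for $s'$ in the \emph{effective} reference $C'\cap L$, and the paper regards the inclusion $C'\cap L\subseteq C$ (with $C=R(P_i(E)\cap L({\bf G}_{i_1}\|{\bf G}_{i_2}))$ and $C'=R(P_i(E)\cap L({\bf G}_{i_1}\|{\bf G}_{i_2}\|{\bf G}_{i_3}))$) as immediate.

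Your route is different: you unfold the definition, pick a concrete two-agent preimage $w'$ of $t'$, and then try to force $w'\in P_i(E)$ via a group-level version of $({\bf G},R)$-normality. This is considerably more machinery than the paper deploys, and it has a real gap where you yourself flag it: assumption~(A2) gives $R^{-1}R(E)\cap L({\bf G})\subseteq E$, not $R^{-1}R(P_i(E))\cap L({\bf G})\cap\Sigma_i^*\subseteq P_i(E)$. That ``projected normality'' does not follow from (A2) without further argument, and your sketched lift-and-project workaround (interleave $w'$ with the other-group parts of some $s\in E$, apply (A2), project back) would need you to exhibit a lifted string that lies simultaneously in $R^{-1}R(E)$ and in $L({\bf G})$, which is not automatic from the pieces you have.

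In short: you have correctly isolated the only nontrivial point --- getting from $t'\in C'\cap L$ to $t'\in C$ --- which is exactly what the paper sweeps under ``directly from the definition.'' But your proposed resolution via projected normality is itself incomplete. Compared with the paper, your approach is more explicit about where the difficulty lies; the paper's approach is shorter but arguably glosses over the same step you struggle with.
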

\begin{proof} It follows directly from definition of relative observability. 
\end{proof}

 In some proof below, we employ the same state transition structure of isomorphic generators to find a new string $t'$ that has a similar property with $t$, but with one less agent. For all agents in the same group we have the following "go down" property, an illustration is below the remark. 
\begin{remark}\label{remark}
Consider arbitrary $\ell$ agents ${\bf G}_{i_1},{\bf G}_{i_2},\dots,
{\bf G}_{i_\ell}\in \mathcal{G}_i$. If we have a string $t\in R(L(\|_{j=1}^{\ell}{\bf G}_{i_j}))$ and $t\notin R(L(\|_{j=1}^{\ell-1}{\bf G}_{i_j}))$ for $\ell\in \{ 1,\ldots, n_i\}$, then there exists a string $s\in \Sigma^*$ such that $s\in L(\|_{j=1}^{\ell}{\bf G}_{i_j})$ and for all $\tilde s$ with $R(\tilde s)=t$
including $s$ itself we have $\tilde s \notin L(\|_{j=1}^{\ell-1}{\bf G}_{i_j})$. Let $s'=\overline{P}_{i\ell}(s)$ for $\overline{P}_{i\ell}: \Sigma_i^*\to (\Sigma_i\setminus \Sigma_{i\ell})^*$. We thus have $s'\in L(\|_{j=1}^{j=\ell-1}{\bf G}_{i_j})$ and $s'\notin L(\|_{j=1}^{j=\ell-2}{\bf G}_{i_j})$. The latter is obtained from that if $s'\in L(\|_{j=1}^{\ell-2}{\bf G}_{i_j})$, then we have $s\in L(\|_{j=1}^{\ell-1}{\bf G}_{i_j})$ which is conflict with $t\notin R(L(\|_{j=1}^{\ell-1}{\bf G}_{i_j}))$. 
 Then we denote a string $t'\in T_i^*$ such that $R(s')=t'$. It implies that
 $t'\in R(L(\|_{j=1}^{\ell-1}{\bf G}_{i_j}))$ and $t'\notin R(L(\|_{j=1}^{\ell-2}{\bf G}_{i_j}))$. If there exist strings $s''$ with $R(s'')=t'$ such that $s''\in L(\|_{j=1}^{\ell-2}{\bf G}_{i_j})$, which implies that $t'\in R(L(\|_{j=1}^{j={\ell-2}}{\bf G}_{i_j}))$. Then from $s'\notin L(\|_{j=1}^{\ell-2}{\bf G}_{i_j})$ we see that $s''\neq s'=\overline{P}_{i\ell}(s)$. Therefore, for $s'=\overline{P}_{i\ell}(s)$ we have  $R(s')=t'\in R(L(\|_{j=1}^{\ell-1}{\bf G}_{i_j}))$ and $t'\notin R(L(\|_{j=1}^{\ell-2}{\bf G}_{i_j}))$.
\end{remark} 

For the small factory example in Fig. 3. If we take $t=11.10.11.10.11.10=R(111.110.121.120.131.130)$, then we have $t\in R(L({\bf G}_{i_1}\|{\bf G}_{i_2}\|{\bf G}_{i_3}))$, \\
$t \notin R(L({\bf G}_{i_1}\|{\bf G}_{i_2}))$, and string $s=111.110.121.120.131.130$ with $R(s)=t$.  Let $s'=\overline{P}_{i_3}(s)=111.110.121.120$. Then we have $t'=R(s')=R(111.110.121.120)=11.10.11.10$ for $\overline{P}_{i_3}: \Sigma_i^*\to (\Sigma_i\setminus \Sigma_{i_3})^*$. It can be verified that $t' \in R(L({\bf G}_{i_1}\|{\bf G}_{i_2}))$ and $t' \notin R(L({\bf G}_{i_1}))$.  
Now we proceed in a similar way with increasing the plant component.
\begin{lemma} \label{Lemma:check1}
  For each group $i \in \{1,\ldots,l\}$ and ${\bf G}_{i_1},{\bf G}_{i_2}, {\bf G}_{i_3}\in \mathcal{G}_i$,  if $L({\bf H}_{i})$ is relatively observable  with respect to $R(P_i(E)\cap L({\bf G}_i))$ and $R(L({\bf G}_{i_1}\|{\bf G}_{i_2}))$ for $P_i:\Sigma^*\rightarrow \Sigma_i^*$, then $R(L({\bf G}_{i_1}\|{\bf G}_{i_2}))$ is relatively observable  with respect to $R(P_i(E)\cap L({\bf G}_i))$ and  $R(L({\bf G}_{i_1}\|{\bf G}_{i_2}\|{\bf G}_{i_3}))$. 
 \end{lemma}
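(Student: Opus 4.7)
The plan is to unfold the definition of relative observability, to dispose of the easy case where $t'a$ already lies in the two-agent relabeled language, and to reduce the remaining case to the template-level hypothesis via Remark~\ref{remark} together with a rerouting argument that exploits (SEF) and agent isomorphism. Concretely, I fix $t,t'\in T^*$ and $a\in T$ with $ta\in R(L({\bf G}_{i_1}\|{\bf G}_{i_2}))$, $t'\in R(P_i(E)\cap L({\bf G}_i))$, $t'a\in R(L({\bf G}_{i_1}\|{\bf G}_{i_2}\|{\bf G}_{i_3}))$, and $P_R(t)=P_R(t')$, with the goal $t'a\in R(L({\bf G}_{i_1}\|{\bf G}_{i_2}))$.

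If $t'a\in R(L({\bf G}_{i_1}\|{\bf G}_{i_2}))$ there is nothing to prove, so assume the contrary. Remark~\ref{remark} with $\ell=3$ then supplies a preimage $s\in L({\bf G}_{i_1}\|{\bf G}_{i_2}\|{\bf G}_{i_3})$ of $t'a$ such that no preimage of $t'a$ lies in $L({\bf G}_{i_1}\|{\bf G}_{i_2})$, together with $s''=\overline{P}_{i_3}(s)\in L({\bf G}_{i_1}\|{\bf G}_{i_2})\setminus L({\bf G}_{i_1})$; the relabeled string $R(s'')$ is a strictly shorter ``shadow'' of $t'a$ that already sits in $R(L({\bf G}_{i_1}\|{\bf G}_{i_2}))$ but not in $L({\bf H}_i)$. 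The next step is to lift this shadow back to $t'a$ by rerouting, one at a time, the $\Sigma_{i_3}$-contributions of $s$ to their $R$-equivalent siblings in $\Sigma_{i_1}\cup\Sigma_{i_2}$, which exist by isomorphism of the agents in $\mathcal{G}_i$ and the definition of $R$. After each rerouting, the hypothesis that $L({\bf H}_i)$ is relatively observable with respect to $R(P_i(E)\cap L({\bf G}_i))$ and $R(L({\bf G}_{i_1}\|{\bf G}_{i_2}))$ certifies admissibility of the current prefix in the two-agent product, with $t'\in R(P_i(E)\cap L({\bf G}_i))$ and $P_R(t)=P_R(t')$ playing the roles of reference and observation equivalence; Lemma~\ref{Lemma:check0} is available to adjust the reference language when partial rerouting forces a slight enlargement of the reference.

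The main obstacle, I expect, is ensuring that the rerouting can always be completed without violating the state constraints of agents $i_1$ and $i_2$, and that the unobservable events by which $t$ and $t'$ may differ do not obstruct it. I plan to formalize this by induction on the number of $\Sigma_{i_3}$-events of $s$, showing that at each step either an immediate rerouting succeeds or the template hypothesis, combined with Lemma~\ref{transitivityRO}, forces the two-agent product to already accommodate the current prefix; iterating places $t'a$ itself in $R(L({\bf G}_{i_1}\|{\bf G}_{i_2}))$ and yields the desired relative observability.
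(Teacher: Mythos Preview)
The paper does not actually include a proof of this lemma: in Section~5 the statement of Lemma~\ref{Lemma:check1} is immediately followed by ``The following claim is needed'' and Proposition~\ref{prop0}, with no intervening argument. So there is nothing in the paper to compare your proposal against, and your sketch is already more detailed than what the paper offers.

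That said, there is a genuine gap in how you plan to invoke the hypothesis. By definition, relative observability of $L({\bf H}_i)$ with respect to reference $R(P_i(E)\cap L({\bf G}_i))$ and plant $R(L({\bf G}_{i_1}\|{\bf G}_{i_2}))$ has the shape: from $ub\in L({\bf H}_i)$, $u'\in R(P_i(E)\cap L({\bf G}_i))$, $u'b\in R(L({\bf G}_{i_1}\|{\bf G}_{i_2}))$ and $P_R(u)=P_R(u')$, conclude $u'b\in L({\bf H}_i)$. In particular, membership of $u'b$ in the two-agent relabeled language is a \emph{premise}, not a conclusion. Your sentence ``the hypothesis \ldots\ certifies admissibility of the current prefix in the two-agent product'' therefore cannot be right as written: the hypothesis never outputs membership in $R(L({\bf G}_{i_1}\|{\bf G}_{i_2}))$, only in $L({\bf H}_i)$, and it requires as input precisely what you are trying to establish. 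If you intend to use it contrapositively, or through some auxiliary pair $(u,u')$ manufactured from the shadow string, that step is absent from the proposal.

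The rerouting idea is likewise under-specified. Replacing a $\Sigma_{i_3}$-event by an $R$-equivalent sibling in $\Sigma_{i_1}\cup\Sigma_{i_2}$ preserves the $R$-image, but the sibling must be enabled in the current local state of agent $i_1$ (or $i_2$), and nothing in the setup guarantees this. Remark~\ref{remark} gives you the ``go down'' shadow $R(\overline P_{i_3}(s))$ inside $R(L({\bf G}_{i_1}\|{\bf G}_{i_2}))$, but it does not say that the deleted events can be reinserted one by one while staying in the two-agent product. To make the induction go through you will need an explicit mechanism, presumably exploiting (SEF) together with the symmetry $R(L({\bf G}_{i_1}\|{\bf G}_{i_2}))=R(L({\bf G}_{i_j}\|{\bf G}_{i_{j'}}))$ for all $j\neq j'$, that actually produces a two-agent witness for $t'a$; at present the inductive step is asserted rather than argued.
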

The following claim is needed.
 \begin{proposition} \label{prop0}
   For each group $i \in \{1,\ldots,l\}$ and ${\bf G}_{i_1},{\bf G}_{i_2}\in \mathcal{G}_i$,  if $L({\bf H}_{i})$ is relatively observable  with respect to $R(P_i(E)\cap L({\bf G}_i))$ and $R(L({\bf G}_{i_1}\|{\bf G}_{i_2}))$ for $P_i:\Sigma^*\rightarrow \Sigma_i^*$, then $L({\bf M}_{i})$  is relatively observable with respect to $R(P_i(E)\cap L({\bf G}_i))$ and $L({\bf G}_{i})$.
  \end{proposition}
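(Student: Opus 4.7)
The plan is to reduce the conclusion to a chain of one-step relative-observability claims and then glue them together with Lemma~\ref{transitivityRO}. Throughout, I would write $C := R(P_i(E) \cap L({\bf G}_i))$ and read the conclusion of Proposition~\ref{prop0} as asserting that $L({\bf M}_i)$ is relatively observable with respect to $C$ and $R(L({\bf G}_i))$; the outer $R$ on $L({\bf G}_i)$ is needed for alphabet compatibility because $L({\bf M}_i) \subseteq T_i^*$. Recall also that $L({\bf M}_i) = R(L(\|_{j=1}^{k_i} {\bf G}_{i_j}))$ and $L({\bf G}_i) = \|_{j=1}^{n_i} L({\bf G}_{i_j})$.

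The first stage is to establish by induction on $k$ the claim
\begin{align*}
(\star_k)\qquad R(L(\|_{j=1}^{k} {\bf G}_{i_j}))\ \text{is relatively observable wrt}\ C\ \text{and}\ R(L(\|_{j=1}^{k+1} {\bf G}_{i_j}))
\end{align*}
for every $k \in \{1, \ldots, n_i - 1\}$. The base case $k=1$ is precisely the hypothesis of Proposition~\ref{prop0}, since $L({\bf H}_i) = R(L({\bf G}_{i_1}))$. For the inductive step I would replay the proof technique of Lemma~\ref{Lemma:check1} one level higher, exploiting (SEF) (which forces the intra-group composition to be a shuffle) and the isomorphism of the agents in $\mathcal{G}_i$: given $t'b \in R(L(\|_{j=1}^{k+1}))$ that $P_R$-matches some $tb \in R(L(\|_{j=1}^{k}))$, the ``go-down'' construction of Remark~\ref{remark} either places $t'b$ directly in $R(L(\|_{j=1}^{k}))$ (to which the inductive hypothesis applies) or produces an isomorphic $k$-agent sub-product realising a witness with the same $P_R$-image, again handled by the inductive hypothesis.

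The second stage is to chain $(\star_{k_i}), (\star_{k_i+1}), \ldots, (\star_{n_i-1})$ using Lemma~\ref{transitivityRO}. Since the reference language $C$ is the same at every step, the condition $C \subseteq C'$ of that lemma holds with equality, and after $n_i - k_i$ applications we get that $L({\bf M}_i) = R(L(\|_{j=1}^{k_i}))$ is relatively observable with respect to $C$ and $R(L(\|_{j=1}^{n_i})) = R(L({\bf G}_i))$. The main obstacle I anticipate is the inductive step: Lemma~\ref{Lemma:check1} is phrased asymmetrically, with the template $L({\bf H}_i)$ as the sublanguage and a specific two-agent plant, so lifting it to a $(k-1)$-to-$k$-agent jump requires rerunning the argument with more agents in play. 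The crucial ingredients are that $R$ respects the observable/unobservable partition (so $P_R$-equivalence is preserved when one agent is projected out) and that (SEF) makes the intra-group shuffle symmetric under permutation of the isomorphic agents; together these let Remark~\ref{remark} discard a redundant copy without affecting any of the conditions in the definition of relative observability.
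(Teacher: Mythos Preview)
Your proposal is correct and follows essentially the same approach as the paper: first extend Lemma~\ref{Lemma:check1} inductively to obtain the one-step relative-observability claims $(\star_k)$, then chain $(\star_{k_i}),\ldots,(\star_{n_i-1})$ via Lemma~\ref{transitivityRO} to reach $R(L({\bf G}_i))$. Your explicit remark that the plant argument in the conclusion must be read as $R(L({\bf G}_i))$ is well taken, and your identification of the base case $(\star_1)$ with the hypothesis (since $L({\bf H}_i)=R(L({\bf G}_{i_1}))$) and of the inductive step with a higher-level replay of Lemma~\ref{Lemma:check1} using the go-down construction of Remark~\ref{remark} matches exactly what the paper means by ``extending Lemma~\ref{Lemma:check1} inductively.''
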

 \begin{proof} 
Extending Lemma~\ref{Lemma:check1} inductively, it is derived that if $L({\bf H}_{i})$  ($i \in \{1,\ldots,l\}$) is  relatively observable with respect to $R(P_i(E)\cap L({\bf G}_i))$ and $R(L({\bf G}_{i_1}\|{\bf G}_{i_2}))$, then $L({\bf M}_{i})$ is  relatively observable with respect to $R(P_i(E)\cap L({\bf G}_i))$ and \\
$R(L(||_{j \in \{1,\ldots,k_{i}+1\}} {\bf G}_{i_j}))$. Again by applying the transitivity lemma (Lemma~\ref{transitivityRO}) inductively, we have $L({\bf M}_{i})$  is relatively observable with respect to $R(P_i(E)\cap L({\bf G}_i))$ and $L({\bf G}_{i})$. 
 \end{proof}
Now we are ready to prove Proposition~\ref{prop:check}.\\
   \begin{proof} 
Let $t,t'\in T^*$, $a\in T$, $ta\in  L({\bf M})$, $P_R(t)=P_R(t')$,  $t'\in R(P_i(E)\cap L({\bf G}))$, and $t'a \in R(L({\bf G}))$. We will show that $t'a\in  L({\bf M})$. 
 From $ta\in L({\bf M})$ we derive
\begin{align*}
&ta\in L({\bf M})= L(||_{i=1,\ldots, l} {\bf M}_{i})=\bigcap_{i=1,\ldots, l} P_{i|R}^{-1}L({\bf M}_{i}),
\end{align*}
where $P_{i|R}:T^*\rightarrow T_i^*.$ We thus get that $P_{i|R}(ta)\in L({\bf M}_i)$. 
Since $t'\in R(E\cap  L({\bf G}))$, there exists a string $s'\in \Sigma^*$ such that $s'\in E\cap  L({\bf G})$ and $R(s')=t'$. By $L({\bf G})=L(||_{i=1,\ldots, l} {\bf G}_{i})=\bigcap_{i=1,\ldots, l}P_i^{-1}(L({\bf G}_{i}))$ with $P_i:\Sigma^*\rightarrow \Sigma_i^*$, we have 
\begin{align*}
&s'\in E\cap \bigcap_{i=1,\ldots, l}P_i^{-1}(L({\bf G}_{i}))\subseteq  E\cap  P_i^{-1}(L({\bf G}_{i})).
\end{align*}
Then we get that $P_i(s')\in P_i(E)\cap  L({\bf G}_{i})$ which implies that $R(P_i(s'))=P_{i|R}(R(s'))=P_{i|R}(t')\in R(P_i(E)\cap  L({\bf G}_{i}))$. Recall that $t'a \in R(L({\bf G}))$, then there exists an event $b'\in \Sigma$ such that $R(b')=a$ and
\begin{align*}
&s'b'\in \|_{i=1}^l L({\bf G}_{i})=\bigcap_{i=1}^l P_{i}^{-1}(L({\bf G}_{i})). 
\end{align*}
We have $s'b'\in P_{i}^{-1}(L({\bf G}_{i}))$, i.e. $P_{i}(s'b')\in L({\bf G}_{i})$. Then  $R(P_{i}(s'b'))=P_{i|R}(R(s'b'))=P_{i|R}(t'a)\in R(L({\bf G}_{i}))$. It follows form $P_R(t)=P_R(t')$ that $P_R(P_{i|R}(t))=P_R(P_{i|R}(t'))$.
By Proposition~\ref{prop0} it directly follows that if $L({\bf H}_{i})$ ($i \in \{1,\ldots,l\}$) is relatively observable  with respect to $R(P_{i}(E)\cap L({\bf G}_{i_1}\|{\bf G}_{i_2}))$ and $R(L({\bf G}_{i_1}\|{\bf G}_{i_2}))$, then $L({\bf M}_i)$ is relatively observable  with respect to \\$R(P_{i}(E)\cap L({\bf G}_i))$ and $R(L({\bf G}_i))$. Therefore, $P_{i|R}(t'a)\in L({\bf M}_{i})$, i.e.  $t'a \in P_{i|R}^{-1}L(({\bf M}_{i}))$. It is derived that $t'a \in ||_{i=1}^l L({\bf M}_{i})=L({\bf M})$.
  \end{proof}



Next we will show that the  conditions in  Theorem~\ref{maximal permissiveness} can be checked with low computation effort. 

 Under shared event free (SEF) assumption between different agents in the same group relabeled observation consistency (ROC) condition in Theorem~\ref{maximal permissiveness}  always holds true. We need the following lemma that is used in the proof of Proposition \label{prop:checkingROC}.
 It is technical and states that it is possible to replace a string $s'$ by a similar string $w'$ with the same projection as a given string $s$. 
\begin{lemma} \label{lem:5}
Consider ${\bf G}$ given in (\ref{gi}). Consider  strings $s=\tilde s\sigma b\in L({\bf G})$ and $s'=\tilde s'\sigma' b'\in L({\bf G})$ with  $\sigma, \sigma'\in \Sigma_{uo}^*$ such that $P(\tilde s)=P(\tilde s')$ and $R(b)=R(b')$   Then there exists a string $w'\in L({\bf G})$ such that $P(w')=P(s)$ and $R(w')=R(s')$.
 \end{lemma}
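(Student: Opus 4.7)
I would do a case split on the observability of the last event $b$ (equivalently of $b'$, since the relabeling $R$ preserves observability). If $b\in\Sigma_{uo}$, take $w':=s'$: then $R(w')=R(s')$ trivially, and since $\sigma$, $\sigma'$, $b$, $b'$ are all unobservable, $P(w')=P(s')=P(\tilde s')=P(\tilde s)=P(s)$. The same choice $w':=s'$ also works when $b=b'\in\Sigma_o$, because then $P(s)=P(\tilde s)b=P(\tilde s')b'=P(s')$. So the nontrivial case is $b\neq b'$ both observable; by $R(b)=R(b')$ and the group-preserving property of $R$, the two events lie in a common group $i_0$, and by (SEF) in distinct agents $j_0\neq j'_0$, so $b\in\Sigma_{i_0,j_0}$ and $b'\in\Sigma_{i_0,j'_0}$.

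In this case the two identities on $w'$ fix its length to $|s'|$, pin its observable positions to those of $s'$ with values coming from $s$ (equal to those of $s'$ except that the last observable letter is $b$ instead of $b'$), and leave only the freedom of choosing, at each unobservable position $k$ of $s'$, some $w'_k\in R^{-1}(R(s'_k))\cap\Sigma_{uo}$ -- i.e.\ the agent inside the group to which $R(s'_k)$ belongs. My plan is to exploit the isomorphism $R({\bf G}_{i_0,j_0})=R({\bf G}_{i_0,j'_0})={\bf H}_{i_0}$: the event-bijection it induces between $\Sigma_{i_0,j_0}$ and $\Sigma_{i_0,j'_0}$ extends to an involution $\pi$ on $\Sigma$ that satisfies $R\circ\pi=R$ and, under (SEF), is an automorphism of $L({\bf G})$ (two isomorphic factors of a parallel composition may be swapped without changing the product). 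I would then define $w'$ as the result of applying $\pi$ at a carefully chosen subset $X$ of unobservable group-$i_0$ positions of $s'$ together with the last position (which swaps $b'$ to $b$), leaving every other letter of $s'$ unchanged; since $P(\tilde s)=P(\tilde s')$ already forces the observable events inside $\tilde s$ to be identical in $s$ and $s'$, no earlier observable position needs to be touched, so $R(w')=R(s')$ and $P(w')=P(s)$ hold by construction for every choice of $X$.

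The hard part will be choosing $X$ so that $w'\in L({\bf G})$. Outside the two agents $j_0, j'_0$ of group $i_0$ the projections of $w'$ coincide with those of $s'$ and are automatically valid, so it suffices to verify $P_{i_0,j_0}(w')\in L({\bf G}_{i_0,j_0})$ and $P_{i_0,j'_0}(w')\in L({\bf G}_{i_0,j'_0})$. Passing through the isomorphism, this reduces to a scheduling problem on the template ${\bf H}_{i_0}$: partition $P_{R,i_0}(R(s'))$ into two strings in $L({\bf H}_{i_0})$ whose observable sub-sequences are dictated by $P_{i_0,j_0}(s)$ and $P_{i_0,j'_0}(s)$ respectively, and whose $j_0$-part ends with $R(b)$. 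Existence of such a split is witnessed by interpolating between the obvious one induced by $s'$ (where $R(b')$ sits on the $j'_0$-side) and the one induced by $s$ (where $R(b)$ sits on the $j_0$-side), using the equality of template languages across agents to move individual unobservable events from one side to the other. Making this interpolation precise --- in particular handling the positions where $s$ and $s'$ would impose conflicting agent assignments on the same unobservable template event --- is where the technical bulk of the proof lies; concretely I expect to process $R(s')$ left to right, maintain the states reached in the two copies of ${\bf H}_{i_0}$, and swap the agent at an unobservable position only when doing so keeps both template projections on a trajectory extendable to the prescribed endpoint.
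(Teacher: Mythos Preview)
Your case split and the two easy subcases ($b\in\Sigma_{uo}$, or $b=b'\in\Sigma_o$, with $w':=s'$) are exactly what the paper does.

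For the hard subcase $b\neq b'\in\Sigma_o$, the paper's argument is in fact \emph{less} detailed than yours. It writes $\tilde s'=\alpha_1\beta_1\cdots\alpha_n\beta_n$ with $\alpha_k\in\Sigma_{uo}^*$, $\beta_k\in\Sigma_o$, forms the candidate set
\[
W\,R^{-1}R(\sigma')\,b,\qquad W=R^{-1}R(\alpha_1)\beta_1\cdots R^{-1}R(\alpha_n)\beta_n,
\]
and then simply asserts that this set meets $L({\bf G})$ ``by similar structure of agents in groups,'' picking $w'$ from the intersection. No swap involution, no scheduling argument, no induction is spelled out; the symmetry of isomorphic agents under (SEF) is invoked in one sentence. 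So the set of admissible $w'$ the paper works with is precisely the one your constraints on $P(w')$ and $R(w')$ describe (the paper allows reassigning each unobservable letter to \emph{any} agent of its group, not only swapping $j_0\leftrightarrow j'_0$), and the existence claim you identify as ``the technical bulk'' is exactly the step the paper leaves to the reader.

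In short, you are not missing any idea that the paper supplies; your swap involution $\pi$ and the reduction to a two-copy template scheduling problem are a reasonable way to unpack the paper's one-line appeal to symmetry. Two remarks if you pursue this: (i) keeping the paper's broader freedom (any $R$-preimage at each unobservable position, not just the $j_0\!\leftrightarrow\! j'_0$ swap) can only make the existence argument easier, so there is no need to restrict to $\pi$; (ii) a left-to-right greedy choice will need a stated invariant (e.g., that after each prefix the pair of template states reached coincides, as a multiset, with the pair reached along $s'$) to guarantee that the remaining suffix can still be completed.
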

\begin{proof} First of all, from $R(b)=R(b')$ it follows that $b,b'$ are either both observable
or both unobservable. Since $P(s)=P(\tilde s\sigma b)=P(\tilde s) P( b)$ and $P( s')=P(\tilde s'\sigma' b')=P(\tilde s') P(b')$, we have that $b,b'\in \Sigma_{uo}$ implies $P(s)=P(s')$. Then we can take $w'=s'$. Similarly, if $b,b'\in \Sigma_{o}$ and  $b=b'$ we also get
$P(s)=P(s')$.

Assume that $b\not=b'$ and $b,b'\in \Sigma_{o}$. Then $P(s)\not=P(s')$.\\
Let $\tilde s'=\alpha_1\beta_1\alpha_2\beta_2,\dots,\alpha_n\beta_n$ for $\alpha_1,\dots,\alpha_n\in \Sigma_{uo}$ and $\beta_1,\dots,\beta_n\in\Sigma_o$. Now we consider a  set
$$W=\{w\in R^{-1}R(\alpha_1)\beta_1R^{-1}R(\alpha_2)\beta_2,\dots,R^{-1}R(\alpha_n)\beta_n\}.$$
By similar structure of agents in groups we have from
$s'=\tilde s'\sigma' b'\in L({\bf G})$ that $WR^{-1}R(\sigma')b\cap L({\bf G})\neq \emptyset$. Thus, there  exists a string $w'\in WR^{-1}R(\sigma')b\cap L({\bf G})$, i.e.
$w'=\tilde w'\tilde\sigma'b \in L({\bf G})$ with $\tilde w'\in W$. We then have: 
$P(\tilde w')=P(\tilde s')=\beta_1\beta_2,\dots,\beta_n$ and 
$R(\tilde w')=R(\tilde s')=R(\alpha_1)R(\beta_1)R(\alpha_2)R(\beta_2),\dots,R(\alpha_n)R(\beta_n). $
Note that  $\tilde\sigma'\in \Sigma_{uo}^*$, because $\tilde\sigma'\in R^{-1}R(\sigma')$. 
Hence, we have $P(w')=P(s)$ and due to $R(b)=R(b')$ we also have
$R(w')=R(s')$. 
\end{proof}
Now we are ready to state the result concerning ROC.
\begin{proposition} \label{prop:checkingROC}
Consider ${\bf G}$ given in (\ref{gi}). Under (SEF) assumption  ROC condition always holds. 
\end{proposition}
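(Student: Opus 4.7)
The plan is to lift the single-event swap of Lemma~\ref{lem:5} to the whole observable decomposition of $s$. Given $s\in L({\bf G})$ and $t'\in R(L({\bf G}))$ with $P_R(R(s))=P_R(t')$, I would first fix any witness $s_0\in L({\bf G})$ with $R(s_0)=t'$ (which exists because $t'\in R(L({\bf G}))$) and decompose both strings along the positions of their observable letters:
\[
s = u_0 b_1 u_1 b_2 \cdots b_n u_n, \qquad s_0 = v_0 c_1 v_1 \cdots c_m v_m,
\]
with $u_i,v_j\in\Sigma_{uo}^*$ and $b_i,c_j\in\Sigma_o$. Since $R$ preserves observability status, $P_R(R(s))=R(b_1)\cdots R(b_n)$ and $P_R(R(s_0))=R(c_1)\cdots R(c_m)$, so the hypothesis forces $m=n$ and $R(b_i)=R(c_i)$ for every $i$.

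Next, I would work with the candidate set
\[
W := R^{-1}R(v_0)\, b_1\, R^{-1}R(v_1)\, b_2 \cdots b_n\, R^{-1}R(v_n),
\]
because every $w \in W$ automatically satisfies $P(w)=b_1\cdots b_n=P(s)$ and $R(w)=R(v_0)R(c_1)\cdots R(c_n)R(v_n)=R(s_0)=t'$. The problem thus reduces to showing $W\cap L({\bf G})\neq\emptyset$. I would prove this by induction on $n$: the base case $n=0$ is immediate because $s_0\in R^{-1}(t')\cap L({\bf G})$; in the inductive step, I invoke the induction hypothesis on the prefix $\hat s=u_0 b_1\cdots u_{n-1}$ and on $R(\hat s_0)=R(v_0 c_1\cdots v_{n-1})\in R(L({\bf G}))$ to obtain some $\hat s'\in L({\bf G})$ with $R(\hat s')=R(\hat s_0)$ and $P(\hat s')=P(\hat s)$, and then extend $\hat s'$ by $b_n$ followed by a carefully chosen $v_n^\ast\in R^{-1}R(v_n)$.

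The main obstacle, and the precise place where (SEF) is used, is exactly this extension: given $\hat s'\in L({\bf G})$ together with $\hat s_0 c_n v_n\in L({\bf G})$ and $R(b_n)=R(c_n)$, I must produce $v_n^\ast$ so that $\hat s' b_n v_n^\ast\in L({\bf G})$. Under (SEF) the alphabets of the agents inside each group are pairwise disjoint, so $L({\bf G}_i)$ is the pure shuffle $\|_{j=1}^{n_i}L({\bf G}_{ij})$ of isomorphic component languages (the isomorphism between each ${\bf G}_{ij}$ and the template ${\bf H}_i$ being carried by $R$). Consequently, membership in $L({\bf G})$ can be checked agent by agent through the projections $P_{ij}$, and re-attributing template events among agents of the same group preserves both the relabeling and---when done consistently---membership in $L({\bf G})$. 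This is precisely the ``similar structure of agents in groups'' step used at the end of the proof of Lemma~\ref{lem:5}: swap the agent performing $c_n$ for the one performing $b_n$ (legitimate because $R(b_n)=R(c_n)$ places them in the same group), then redistribute the template events of $R(v_n)$ among the agents of that group so that each agent's local word remains a valid trajectory of its component generator, hence of ${\bf G}_{ij}$ by the relabeling isomorphism. The principal technical burden is verifying that this redistribution can be made globally consistent across all groups simultaneously, now that shared events between different templates are permitted in this paper; I expect this to follow by a direct agent-by-agent check mirroring the set-theoretic manipulation $W R^{-1}R(\sigma') b \cap L({\bf G}) \neq \emptyset$ at the end of Lemma~\ref{lem:5}'s proof.
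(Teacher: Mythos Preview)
Your proposal is essentially the same argument as the paper's: both proceed by induction on the length of the observable trace $P_R(t')$, both start from a witness in $R^{-1}(t')\cap L({\bf G})$, and both handle the inductive extension by the agent-swapping argument under (SEF) that underlies Lemma~\ref{lem:5}. Your explicit global candidate set $W$ and full observable decomposition $s_0=v_0c_1\cdots c_nv_n$ (including the trailing unobservable block $v_n$) is slightly more careful bookkeeping than the paper's, which writes $t'=\tilde t'\tau' a$ and thus tacitly assumes the last symbol of $t'$ is observable, but this is a presentational difference rather than a different route.
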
 
\begin{proof}
The proof goes by structural induction with respect to the observable string
$P_{R}(Rs)=P_{R}(t')\in T_o^*$, where $s\in L({\bf G})=L$ and $t'\in R(L)$. The base step is proven below.
If $P_{R}(Rs)=P_{R}(t')=\varepsilon$ then we need to show that
there exists a string $s'\in L$ such that $R(s')=t'$ and $P(s)=P(s')$.
Since $R(\Sigma_o)=T_o$ we have from $t'=R(s')$ that $s'\in \Sigma_{uo}^*$. Similarly,
from $P_{R}(Rs)=\varepsilon$, i.e. $Rs\in T_{uo}^*$ we have that $s\in \Sigma_{uo}^*$, i.e. we always have that $P(s)=P(s')$. 
Thus, it suffices to take any string $s'\in L$ such that $t'=R(s')$ that exists from $t'\in R(L)$.

The induction  hypothesis  consists in assuming that the ROC condition
holds for all $\tilde s\in L$ and $\tilde t'\in R(L)$ such that 
$P_{R}(R(\tilde s) )=P_{R}(t')=w\in T_{o}^*$. 
In the induction step we will show that ROC condition also holds for $s\in L$ and $t'\in R(L)$ with $P_{R}(Rs)=P_{R}(t')=wa\in T_{o}^*$, namely that there exists a string $s'\in L$ such that $R(s')=t'$ and $P(s)=P(s')$. 

Note that every $t'\in R(L)$ with $P_{R}(t')=wa$ is of the form $t'=\tilde t' \tau' a$
for some $\tilde t'\in R(L)$  with  $P_{R}(\tilde t')=w$ and $\tau'\in T_{uo}^*$.
Similarly, by denoting $t=Rs$, we have
$t\in R(L)$ with $P_{R}(t)=wa$, i.e. $t$ is of the form $t=\tilde t \tau a$
for some $\tilde t\in R(L)$  with  $P_{R}(\tilde t)=w$ and $\tau\in T_{uo}^*$.
Therefore, $s$ can be decomposed as 
$s=\tilde s\sigma b$, where $R(\tilde s)=\tilde t$, $R(\sigma)=\tau$, and $R(b)=a$.
We recall that $P_{R}(\tilde t')=w=P_{R}(R(\tilde s))$. Therefore,
from the induction hypothesis we know that there exists a string $\tilde s' \in L$ such that $R(\tilde s')=\tilde t'$, and $P(\tilde s)=P(\tilde s')$.

We will first show existence of a string $s'\in L$ such that $R(s')=t'$ and $P(s)=P(s')$.
We will search 
$s'$  of the form $s'\in \tilde s' R^{-1}(\tau')R^{-1}(a)$ so that  $R(s')=t'$ holds.
At the end of the proof we will replace string $s'$ by $w'$ based on Lemma \ref{lem:5}.

We will show that there exists an event $b'\in \Sigma_o$ and  a string $\sigma'\in \Sigma_{uo}^*$ such that for $s'=\tilde s'\sigma' b'$ we have  $R(s')=t'$ and $P(s)=P(s')$.
 

From $R(\tilde s)=\tilde t$ and $R(\tilde s')=\tilde t'$ with 
$P_{R}(\tilde t)=P_{R}(\tilde t')$ we obtain that $R(P_R(\tilde s))=P_{R}(R(\tilde s))=P_{R}(R(\tilde s'))=R(P_R(\tilde s'))$.

Let us consider a candidate string $s'=\tilde s'\sigma' b'\in L$ with  $R(s')=t'=\tilde t' \tau' a$.
We will show that if $P(s')\not=P(s)$ then there exists a $w'\in \Sigma^*$ with $R(w')=R(s')$ and $P(s)=P(w')$. 
Since $R(s)=t$ and $R(s')=t'$, we must have that $R(\sigma b)=\tau a$ and
$R(\sigma' b')=\tau' a$. Since $P(\tilde s)=P(\tilde s')$, we must have  $P(b')=P(\sigma' b')\not =P(\sigma b)=P(b).$ 
Therefore, $P(b)\not=P(b')$.
By Lemma~\ref{lem:5} from $R(b)=R(b')$ we obtain that by replacing $s'$ with $w'$, i.e. by taking
$w'=\tilde s'\tilde\sigma' b$ with $P(\tilde w')=P(\tilde s')$ and $R(\tilde w'\tilde\sigma' )=R(\tilde s'\sigma')=t'$ we have $w'\in L$ with $P(w')=P(s)$ and $R(w')=R(s')=t'$.
We then choose this $w'$ as the new $s'$ that satisfies the conditions.

\end{proof}

Finally, we will show how to efficiently check LROC condition in Theorem~\ref{maximal permissiveness}. We notice that due to
$R(b')=R(b)$ we necessarily have that $b$ and $b'$ belong to the same group of isomorphic agents, say $b,b'\in \Sigma_i=\bigcup_{j=1}^{j=n_i} \Sigma_{ij}$. 
If $b$ and $b'$ belong to the same agent then $b=b'$, in which case LROC is trivially satisfied.
Let us assume $b\in \Sigma_{ij}$ and $b'\in \Sigma_{ij'}$ for some $j\not=j'$.
Let $s,s'\in L$ with $P(s)=P(s')$ and  $b,b'\in \Sigma_{uo}$ such that $R(b)=R(b')$  then
$sb\in L$ means $P_{ij}(s)b\in L_{ij}$ and $s'b' \in L$ means  $P_{ij'}(s')b'\in L_{ij'}$
Verification of LROC condition then consists in checking $s'b \in L ,$ i.e. $P_{ij}(s')b\in L_{ij}$. This suggests that LROC condition is similar to observability. 
Indeed, observability of $P_{ij}^{-1}L_{ij}$
and $P_{ij'}^{-1}L_{ij'}$
means that for all $s,s'\in P_{ij}^{-1}(L_{ij})$ such that $P(s)=P(s')$,
$sa\in P_{ij}^{-1}(L_{ij})$ and $s'a\in P_{ij'}^{-1}(L_{ij'})$  observability means that
$s'a\in P_{ij}^{-1}(L_{ij})$.

Note that $s,s'\in L$ imply that in particular $s,s'\in P_{ij}^{-1}(L_{ij})$. $P_{ij}(s)b\in L_{ij}$ mean that $sb\in P_{ij}^{-1}(L_{ij})$, because $P_{ij}(sb)=P_{ij}(s)b\in L_{ij}$. Similarly, $s'b' \in L$, i.e. $P_{ij'}(s')b'\in L_{ij'}$ implies that $s'b'\in P_{ij'}^{-1}(L_{ij'})$, because $P_{ij'}(s'b')=P_{ij'}(s')b'\in L_{ij'}$.
Finally, checking LROC consists in checking if $P_{ij}(s')b\in L_{ij}$, which is
equivalent to $s'b\in P_{ij}^{-1}(L_{ij})$ using the same argument (namely $P_{ij}(b)=b\in \Sigma_{ij}$).

It follows from the above analysis that LROC can be checked 
in the same way as observability of $P_{ij}^{-1}L_{ij}$
with respect to $P_{ij'}^{-1}L_{ij'}$, where string $s$ is extended by $b$, but we allow instead of the same $b$ the event
$s'$ to be extended by a different event $b'$ with the same relabeling. 
Moreover, LROC can be viewed as relabeling counterpart of a similar condition
from hierarchical supervisory control with partial observations, called local observational
consistency (LOC), that was shown checkable in \cite{WODES20}.

\section{Conclusion}
We have studied multi-agent DES with partial observation, where the agents can be divided into several groups, and within each group the agents have similar state transition structures and can be relabeled into the same template. We have designed a scalable supervisor under partial observation whose state size and computational cost are independent of the number of agents. We have compared permissiveness of the scalable supervisor with the monolithic supervisor, and have proposed sufficient conditions, which guarantee that our scalable least restrictive supervisor is not more restrictive than the monolithic one. Moreover, we have proved that all sufficient conditions proposed in this paper can be verified with low computational effort.  
In a future work we will integrate  these partial observation results with already existing results on complete observations and on nonblockingness. Note that this paper is based on relabeling based abstraction of modular (multi-agent) DES and relabeling is a special case of mask type abstraction
as well as natural projection is another special case of abstraction. By integrating the results we can obtain results for hierarchical control under general, mask based, abstraction that can both rename and delete events.

\end{document}